\setlist{leftmargin=*}
\definecolor{bleufonce}{rgb}{0,0,0.4}
\newcommand{\R}{\mathbf{R}}
\newcommand{\Z}{\mathbf{Z}}
\newcommand{\Q}{\mathbf{Q}}
\newcommand{\F}{\mathbf{F}}
\newcommand{\End}{\text{\rm End}}
\newcommand{\SM}{\text{\rm SM}}
\newcommand{\matw}{\text{Mat}_{\text{work}}}
\theoremstyle{plain}
\newtheorem{thm}{Theorem}[section]}
\newtheorem{lem}[thm]{Lemma}
\newtheorem{prop}[thm]{Proposition}
\newtheorem{cor}[thm]{Corollary}
\theoremstyle{definition}
\newtheorem{rmq}[thm]{Remark}
\newtheorem{defi}[thm]{Definition}
\newtheorem{notation}[thm]{Notation}
\newcommand{\gal}{\textrm{Gal}}
\begin{document}
\title{Fast multiplication for skew polynomials}
\numberofauthors{2}
\author{
\alignauthor
Xavier Caruso\\
       \affaddr{IRMAR, CNRS\\
       Campus de Beaulieu\\
       263 avenue du G{\'e}n{\'e}ral Leclerc\\
       35042  RENNES Cedex}\\
       \email{xavier.caruso@normalesup.org}
\alignauthor
J{\'e}r{\'e}my Le Borgne\\
       \affaddr{IRMAR, ENS Rennes, UBL\\
       Campus de Ker Lann\\
       Avenue Robert Schuman\\
       35170 BRUZ}\\
       \email{jeremy.leborgne@ens-rennes.fr}
}
\maketitle
\begin{abstract}
We describe an algorithm for fast multiplication of skew polynomials. 
It is based on fast modular multiplication of such skew polynomials, for 
which we give an algorithm relying on evaluation and interpolation on 
normal bases. Our algorithms improve the best known complexity for these 
problems, and reach the optimal asymptotic complexity bound for large 
degree. We also give an adaptation of our algorithm for polynomials of 
small degree. Finally, we use our methods to improve on the best known 
complexities for various arithmetics problems.
\end{abstract}

\section*{Introduction}


The present paper is dedicated to the description of algorithms for fast arithmetics in skew polynomial rings. Since they were first introduced by Ore, skew polynomials and their variants have been widely studied in several areas of mathematics. In particular, skew polynomials over finite fields have various applications in coding theory~\cite{SiKs09}, 
cryptography~see \cite{BuHe14}, for $p$-adic Galois representations~\cite{LeB}. Fast arithmetics for manipulating these objects is useful for such applications, and has been improved over time since the first breakthrough paper on computational skew polynomials over finite fields, due to Giesbrecht \cite{Gi98}.

Let $K$ be a field and let $L$ be a finite extension  of $K$, endowed with the endomorphism $\sigma$. We assume that $\sigma$ has order $r \geq 1$ and that $K = L^\sigma$. We consider the ring $L[X,\sigma]$ of skew polynomials with coefficients in $L$. This is a non commutative ring where the relation $Xa = \sigma(a) X$ holds for all $a \in L$ (for more detail about the definitions, see section \ref{subs:defi}). The main problem addressed in this paper is the fast multiplication of elements of $L[X,\sigma]$. The complexity of algorithms is described in terms of the number of elementary operations in $K$ with respect to the degree $d$ of the skew polynomials to be multiplied, and the degree $r$ of $L$ over $K$.

\smallskip

\noindent
\textbf{State of the art.}
The na{\"i}ve method for multiplication of skew polynomials of degree $\leq d$ yields an algorithm that has complexity $O(d^2r^2)$ operations in $K$. In \cite{Gi98}, this complexity is improved to $O(dr^2 + d^2r)$. Let $\omega$ denote the exponent of matrix multiplication. The authors of the present paper gave several algorithms for multiplication in \cite{CaLe17}, with best complexity $\tilde O(dr^{\omega-1})$ achieved for $d> r^2$. The most recent results by Puchinger and Wachter-Zeh \cite{PuWa16} give a bound of $\tilde O(d^\frac{\omega+1}{2} r)$ operations in $K$ for multiplication in $L[X,\sigma]$, which improves on the previous results \cite{CaLe17} when $d \in \Theta(r)$, which is the most relevant case for applications in coding theory (see \cite{PuWa16}, \S 4.2). 
In the context of differential operators (which share many similarities with skew polynomials),
Benoit, Bostan and Van der Hoeven have obtained a complexity of $\tilde{O}(\min\{d,r\}^{\omega-2}dr)$ (see \cite{BeBoVa12}, Theorem 1) for multiplication in $L[x]\langle\partial\rangle$. We expect that this complexity should be doable in $L[X,\sigma]$ as well, but we have only achieved it for $d \geq r$.

\smallskip

\noindent
\textbf{Contributions of the paper.}
This paper's main algorithm improves the complexity of the best known algorithms for multiplication in $L[X,\sigma]$ 
to $\tilde O(dr^{\omega-1})$ when $d \geq r$. For $d \in \Theta(r)$, this gives a complexity of $\tilde O(r^\omega)$ operations in $K$. This is quasi-optimal in the sense that matrix multiplication can be reduced to skew polynomial multiplication (this is for example a consequence of Proposition~\ref{prop:evalbasis} below), so that any improvement on the exponent of skew polynomial multiplication would lead to a similar improvement for matrix multiplication. We also design a new algorithm for multiplication of polynomials of small degree $d \ll r$ in $L[X,\sigma]$, whose complexity is $\tilde O(d^{\omega-2}r^2)$.


We also show that our method can be used to improve the best known complexities of various related problems, such as multi-point evaluation, minimal subspace polynomial, and interpolation which are studied in \cite{PuWa16}. We also improve the complexities for greatest common divisors and least common multiples.

\smallskip

\noindent
\textbf{Organization of the paper.}
The first section of the paper focuses on elementary operations for skew 
polynomials with normal bases: evaluation and interpolation. More 
precisely, if $P \in L[X,\sigma]$, then $P(\sigma)$ is an endomorphism 
of the $K$-algebra $L$, and the map $P \mapsto P(\sigma)$ is a morphism 
of $K$-algebras. In this section, we describe how we can compute 
efficiently $P(\sigma)$ using a normal basis and, conversely, how to 
recover $P$ (the reduction modulo $X^r {-} 1$ of) $P$ from the datum of 
$P(\sigma)$ (see Proposition \ref{prop:evalbasis}). We also look into 
more detail how the can solve the same evaluation/interpolation problems 
with $P$ of small degree $n$ at only the first $n$ elements of a normal 
basis.

In the second section, we present our algorithm for fast multiplication of skew polynomials. First, we study how the multiplication can be done efficiently modulo $X^r {-} 1$ through evaluation/interpolation on a normal basis and matrix multiplication. We generalize this study to multiplication modulo $Z(X^r)$ for any irreducible polynomial $Z \in K[T]$. This allows us to give an algorithm for multiplication of skew polynomials of degree $d$ that works in $O(dr^{\omega-1})$ operations in $K$ (where $r^\omega$ denotes the complexity of multiplication of square matrices of size $r$).

In the third section, we give several applications to fast arithmetics for skew polynomials. We show how we can perform general multi-point evaluation, minimal subspace polynomial, and interpolation, as well as usual operations on skew polynomials such as (extended) Euclidean division, greatest common divisor, least common multiple.

\section{Fast evaluation\\and interpolation} 
In this section, we present the notion of skew polynomials, and we study the problems of their evaluation and interpolation using normal bases.

\subsection{Definitions and notations}\label{subs:defi}
Let $K$ be a field and let $L$ be an {\'e}tale $K$-algebra (since $K$ is a field, this just means that $L$ is isomorphic to a product of field extensions of $K$). Let $\sigma$ be an automorphism of $L$. We assume that $\sigma$ has finite order $r$ and that $K = L^\sigma$. The ring $L[X,\sigma]$ of skew polynomials with coefficients in $K$ is the ring whose underlying group is $L[X]$ and whose multiplication is determined by the relation
$$ \forall \alpha \in L,~ X\alpha = \sigma(\alpha)X.$$
The ring $L[X,\sigma]$ is not commutative unless $r=1$.

\smallskip

\noindent
\textbf{Examples.}
The following situations are examples of the general setting that we are considering:
\begin{itemize}
\renewcommand{\itemsep}{0pt}
\item $L = K^r$, and $\sigma$ is the shift operator $(x_0, \dots, x_{r-1}) \mapsto (x_1, \dots, x_{r-1}, x_0)$,
\item (Extensions of finite fields) $K = \F_q$, $L = \F_{q^r}$ and $\sigma : x \mapsto x^q$ is the Frobenius endomorphism of $L$,
\item (Cyclotomic extensions) $K = \Q$ and $L = \Q(\zeta_{p^n})$ where $\zeta_{p^n}$ is a primitive $p^n$-th root of unity and $p$ is prime; $\sigma$ is a generator of the Galois group $\gal(L/K)$ (which is the cyclic group $(\Z/p^n\Z)^\times$).
\item (Kummer extensions) $K$ contains a primitive $r$-th root $\zeta_r$ of $1$, $L = K(\sqrt[r] a)$ for some suitable $a \in K$ and $\sigma$ takes $\sqrt[r] a$ to $\zeta_r \sqrt[r] a$.
\end{itemize}
The two last examples are addressed in~\cite{Ro15} and have applications
to space-time codes.

\begin{rmq}
Usually, $L$ is assumed to be a field extension of $K$. We are considering the more general context of an {\'e}tale $K$-algebra because it is stable under base change: if $L/K$ is {\'e}tale and $K'$ is an extension of $K$, then $L' = L \otimes_K K'$ is {\'e}tale over $K'$ (but it is not a field in general, even if $L$ is). This feature is used mostly in Section \ref{subs:modmultZ}, and does not make the classical results any more difficult to prove.
\end{rmq}
\begin{defi}
A \emph{normal basis} of $L/K$ is a basis $(b_0, \dots, b_{r-1})$ of $L$ over $K$ such that $\sigma(b_{i+1}) = b_i$ (the indices being taken modulo $r$).
\end{defi}
\begin{prop}[\cite{De32}, Satz 1]
Assuming $\sigma$ has order $r$ and $K = L^\sigma$, $L$ has a normal basis.
\end{prop}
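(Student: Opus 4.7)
The plan is to regard $L$ as a module over the commutative ring $R = K[T]/(T^r - 1)$, with $T$ acting as $\sigma$, and to prove this module is free of rank one. A generator $b$ would then yield the required basis by setting $b_i = \sigma^{-i}(b)$: the identity $\sigma(b_{i+1}) = b_i$ is then automatic (with indices read modulo $r$, using $\sigma^r = 1$), and freeness ensures that the $b_i$ are $K$-linearly independent.

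A first step is to check $\dim_K L = r$. Decomposing $L$ into its field factors $L_1 \times \cdots \times L_s$, the group $\langle \sigma \rangle$ permutes the factors, and since $L^\sigma = K$ contains no nontrivial idempotents, it must permute them transitively. The stabilizer of $L_1$ is $\langle \sigma^s \rangle$ and makes $L_1/K$ into a cyclic Galois field extension of degree $r/s$, so orbit--stabilizer gives $\dim_K L = s \cdot (r/s) = r$.

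The heart of the argument is to compute the minimal polynomial $\mu_\sigma \in K[T]$ of $\sigma$ viewed as a $K$-endomorphism of $L$: since $\sigma^r = 1$ we already know $\mu_\sigma$ divides $T^r - 1$, and I want to show equality. The cleanest route is base change to an algebraic closure $\bar K$: the pair $(L \otimes_K \bar K, \sigma \otimes 1)$ is a $(\Z/r\Z)$-Galois étale $\bar K$-algebra with fixed ring $\bar K$, and there is up to isomorphism exactly one such object, namely $\bar K^r$ equipped with the cyclic shift. The shift has matrix equal to the companion matrix of $T^r - 1$, so its minimal polynomial is precisely $T^r - 1$, and since minimal polynomials are preserved under faithfully flat base change, the same holds for $\sigma$ on $L$.

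Once $\mu_\sigma = T^r - 1$ has degree $r = \dim_K L$, it coincides with the characteristic polynomial of $\sigma$, and standard linear algebra (the invariant factor decomposition over the commutative principal ideal ring $R$) forces $L$ to be a cyclic $R$-module; a dimension count then upgrades ``cyclic'' to ``free of rank one'', delivering the desired generator. The main obstacle is really only the étale bookkeeping, first in the orbit argument for $\dim_K L = r$ and then in the identification of $L \otimes_K \bar K$ with $\bar K^r$ and the shift; once this is in place, the entire proof collapses to the inspection of a single $r \times r$ companion matrix, and the final conversion between cyclic generator and normal basis is merely a relabeling.
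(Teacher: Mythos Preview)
The paper does not actually prove this proposition: it is stated with a citation to Deuring~\cite{De32} and no argument is given. So there is no ``paper's own proof'' to compare against.

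Your proof is correct and is essentially the classical proof of the cyclic case of the Normal Basis Theorem, adapted to the \'etale setting the paper works in. The orbit argument on primitive idempotents correctly establishes $\dim_K L = r$ (you are implicitly using Artin's lemma on the field factor $L_1$ to get $[L_1:K] = r/s$); the base-change step is sound because linear independence of $1,\sigma,\ldots,\sigma^{d-1}$ over $K$ is preserved after tensoring with $\bar K$, so the minimal polynomial does not shrink; and once $\mu_\sigma = T^r-1$ has degree $r = \dim_K L$, the structure theorem over the PID $K[T]$ gives cyclicity immediately. One small remark: you reuse the transitivity-of-idempotents argument from your Step~2 when you assert that $(L\otimes_K\bar K,\sigma\otimes 1)$ must be $\bar K^r$ with the cyclic shift, so it would be cleaner to say so explicitly rather than appealing to uniqueness of ``$(\Z/r\Z)$-Galois \'etale $\bar K$-algebras''. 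Otherwise the argument is complete and in the spirit of Deuring's original.
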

The problem of the construction of normal bases has been widely studied, see for example \cite{GaGi90} for the case of finite fields, and \cite{Gi99} for the case of number fields.
In both cases of cyclotomic extensions and Kummer extensions, it is easy to
exhibit a normal basis: in the cyclotomic case, the basis starting with
$b_0 = \zeta_{p^n}$ does the job while in the Kummer case, one can take:
$$b_0 = 
1 + \sqrt[r]{a} + 
\sqrt[r]{a^2} + \cdots + \sqrt[r]{a^{r-1}} = \frac{a-1}{\sqrt[r]{a}-1}.$$
From now on, we assume that we have fixed a normal basis $(b_0, \dots, 
b_{r-1})$ of $L$ together with a working basis in which the elements of $L$
are represented.
Let $\Omega$ be the matrix of change of basis from the working basis to 
the normal basis. We assume that the multiplication in $L$ and the 
application of $\sigma$ can be both performed in $\tilde O(r)$ 
operations in $K$ in the working basis.

\subsection{Evaluation and interpolation\\on a normal basis}\label{sec:eval-inter}
We introduce a relation between polynomials that allows to evaluate the linear map associated to a skew polynomial at the elements of the normal basis $(b_0, \dots, b_{r-1})$. 
\begin{lem}\label{lem:iso1}
The map:
$$ 
\begin{array}{rcl}
L[X,\sigma] & \rightarrow & \End_K(L)\\
A = \sum_{i\geq 0}a_iX^i& \mapsto & A(\sigma) = \sum_{i \geq 0} a_i \sigma^i
\end{array}
$$
is a homomorphism of $K$-algebras. It induces an isomorphism of $K$-algebras:
$$ \varepsilon ~:~L[X,\sigma]/(X^r {-} 1) \simeq \End_K(L).$$
\end{lem}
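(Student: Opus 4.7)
The plan is to first check that $\varepsilon$ is a well-defined $K$-algebra homomorphism, then observe that $X^r-1$ lies in its kernel, and finally prove that the induced map on the quotient is an isomorphism via a dimension count plus an injectivity argument based on the normal basis. For the homomorphism property: since $K = L^\sigma$, each $\sigma^i$ is $K$-linear, and each $a_i \in L$ acts on $L$ by multiplication (also $K$-linear), so $\varepsilon(A)$ indeed lands in $\End_K(L)$. Additivity and $K$-linearity are immediate from the formula. For multiplicativity, by linearity it suffices to treat pure monomials, which reduces to the identity $\sigma \circ \mu_a = \mu_{\sigma(a)} \circ \sigma$ in $\End_K(L)$ (writing $\mu_a$ for multiplication by $a$). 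This holds by direct computation and mirrors exactly the defining relation $Xa = \sigma(a)X$. Since $\sigma^r = \mathrm{id}_L$, one has $\varepsilon(X^r - 1) = 0$, so $\varepsilon$ factors through a $K$-algebra morphism $\bar\varepsilon : L[X,\sigma]/(X^r - 1) \to \End_K(L)$.

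Next, I would compare $K$-dimensions. The quotient $L[X,\sigma]/(X^r - 1)$ is a free left $L$-module with basis $1, X, \ldots, X^{r-1}$, hence has $K$-dimension $r \cdot \dim_K L = r^2$ (the equality $\dim_K L = r$ being guaranteed by the existence of a normal basis, just stated). Since $\dim_K \End_K(L) = r^2$ as well, it suffices to prove that $\bar\varepsilon$ is injective.

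For injectivity, suppose $A = \sum_{i=0}^{r-1} a_i X^i$ satisfies $A(\sigma) = 0$. For every $b, c \in L$ one has $\sum_i a_i \, c \, \sigma^i(b) = c \cdot A(\sigma)(b) = 0$, so the $L$-linear form $(x_i)_i \mapsto \sum_i a_i x_i$ vanishes on the image of the map $\tilde f : L \otimes_K L \to L^r$, $c \otimes b \mapsto (c \sigma^i(b))_i$. The key fact is that $\tilde f$ is an $L$-module isomorphism (both sides are free of rank $r$ over $L$); in the normal basis this amounts to the invertibility over $L$ of the circulant-type matrix $(\sigma^i(b_j))_{i,j} = (b_{j-i})_{i,j}$, which in the field case is the classical Dedekind/Artin linear independence of the distinct characters $\sigma^0, \ldots, \sigma^{r-1}$. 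The étale case reduces to the field case because $K = L^\sigma$ forces $\sigma$ to permute the primitive idempotents of $L$ transitively (any orbit sum is a $\sigma$-fixed idempotent, hence an element of the field $K$, so $0$ or $1$). It then follows that every $a_i$ vanishes.

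The main obstacle is this last step: verifying that $\tilde f$ is an isomorphism (equivalently, the $L$-linear independence of $\sigma^0, \ldots, \sigma^{r-1}$). In the field case it is classical; passing to the étale setting considered in the paper requires the idempotent-decomposition argument sketched above, but once this Galois-type fact is in hand, the rest of the proof is bookkeeping.
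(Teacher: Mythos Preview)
Your proof is correct and follows essentially the same approach as the paper: verify the homomorphism property, note that $X^r-1$ is in the kernel, match $K$-dimensions, and deduce injectivity from the $L$-linear independence of $\mathrm{id},\sigma,\ldots,\sigma^{r-1}$. The paper simply invokes Artin's lemma for this last step, whereas you unpack it via the map $\tilde f$ and add an idempotent argument to cover the \'etale case explicitly---a welcome bit of extra care, but not a different strategy.
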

\begin{proof}
The first map is a homomorphism because for all $a \in L$, $Xa = \sigma(a)X$ in $L[X, \sigma]$. Since $\sigma$ has order $r$, $X^r {-} 1$ lies in the kernel of this map, so $\varepsilon$ is well-defined. Both $L[X,\sigma]/(X^r {-} 1)$ and $\End_K(L)$ are $K$-vector spaces of dimension $r^2$, hence it suffices to prove injectivity. By Artin's Lemma on independence of characters, $\{id, \sigma, \ldots, \sigma^{r-1}\}$ is a linearly independent family over $L$, so that if $P(\sigma) = 0$ for some $P\in L[X,\sigma]$ of degree $<r$, then $P = 0$.
\end{proof}
Lemma \ref{lem:iso1} shows that multiplication of skew polynomials modulo $X^r {-} 1$ is essentially the same as multiplication of $r\times r$ matrices over $K$, assuming that the isomorphism $\varepsilon$ can be computed efficiently (in both ways). We now address this question.
\begin{notation}
Throughout this paper, we will denote $P(x)$ for $P(\sigma)(x) = \varepsilon(P)(x)$ if $P \in L[X,\sigma]$ and $x \in L$.
\end{notation}


Let $T$ be a new (commutative) variable and consider the classical polynomial ring $L[T]$. Let $B = \sum_{i=0}^{r-1} b_i T^i \in L[T]$ be the polynomial whose coefficients are the elements of the normal basis.

\begin{prop}\label{prop:evalbasis}
Let $A = \sum_{i=0}^{r-1}a_iX^i \in L[X,\sigma]$ and let $\tilde A(T) = \sum a_i T^i \in L[T]$. Let $c_j = A(b_j)$ and let $C(T) = \sum_{j=0}^{r-1}c_jT^j$. Then 
$$C(T) = \tilde A(T)B(T) \pmod {T^r {-} 1}.$$
\end{prop}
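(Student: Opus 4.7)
The plan is to expand both sides of the claimed identity, unwind what $A(b_j)$ means using the defining property of a normal basis, and observe that the resulting expression for $c_j$ is exactly the $T^j$-coefficient of a cyclic convolution.

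First I would use the normal basis relation $\sigma(b_{i+1}) = b_i$ (indices modulo $r$) to get, by induction on $i$, the closed form $\sigma^i(b_j) = b_{j-i \bmod r}$ for all $i,j$. Combined with Lemma~\ref{lem:iso1}, which identifies $A(\sigma)$ with $\sum_i a_i \sigma^i$, this yields
\[
c_j \;=\; A(b_j) \;=\; \sum_{i=0}^{r-1} a_i\, \sigma^i(b_j) \;=\; \sum_{i=0}^{r-1} a_i\, b_{j-i \bmod r}.
\]

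Next I would compute the right-hand side directly. Multiplying in the commutative polynomial ring $L[T]$,
\[
\tilde A(T)\,B(T) \;=\; \sum_{i=0}^{r-1}\sum_{k=0}^{r-1} a_i b_k\, T^{i+k}.
\]
Reducing modulo $T^r - 1$ replaces $T^{i+k}$ by $T^{(i+k)\bmod r}$, so the coefficient of $T^j$ in $\tilde A(T) B(T) \bmod (T^r-1)$ is $\sum_{i+k \equiv j \,(r)} a_i b_k = \sum_{i=0}^{r-1} a_i b_{j - i \bmod r}$. This is exactly $c_j$, which concludes the argument.

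There is no real obstacle here; the proposition is essentially the statement that evaluating $A(\sigma)$ on a normal basis amounts to cyclic convolution of the coefficient vectors of $A$ and of the basis, and the normal basis condition is precisely what makes the action of $\sigma^i$ on the $b_j$'s a cyclic shift. The only thing to be a little careful about is tracking the indices modulo $r$ consistently on both sides of the computation.
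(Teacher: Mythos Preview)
Your proof is correct and follows essentially the same approach as the paper: both use the normal basis relation to get $\sigma^i(b_j) = b_{j-i \bmod r}$ and then identify the coefficients of $C(T)$ with those of the cyclic convolution $\tilde A(T)B(T) \bmod (T^r-1)$. The only cosmetic difference is that the paper first reduces by $L$-linearity to the monomial case $A = X^i$ and checks that case, whereas you carry out the full sum for general $A$ directly; the underlying computation is identical.
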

\begin{proof}
By linearity, it is enough to check that the relation holds when $A = X^i$ for $0 \leq i \leq r-1$. Let $0 \leq i \leq r-1$. We have $X^i(b_j) = \sigma^i(b_j) = b_{j-i}$, where indices are taken modulo $r$.\\
On the other hand, doing the calculations modulo $T^r {-} 1$, $ T^iB(T) = \sum_{j=0}^{r-1} b_{j-i}T^j$.
\end{proof}
Proposition \ref{prop:evalbasis}, although elementary, shows that the isomorphism $\varepsilon$ of Lemma \ref{lem:iso1} can be computed efficiently. Moreover, it also shows how the inverse isomorphism can be computed. More precisely:
\begin{cor}\label{cor:complexity_mod1}
Multiplication in $L[X,\sigma]/(X^r {-} 1)$ can be performed in $O(r^\omega)$ operations in $K$.
\end{cor}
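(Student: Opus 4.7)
The plan is to reduce multiplication in $L[X,\sigma]/(X^r-1)$ to the multiplication of two $r\times r$ matrices over $K$, using Lemma~\ref{lem:iso1} to identify the quotient with $\End_K(L)$ and Proposition~\ref{prop:evalbasis} to transport elements back and forth efficiently.

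Given $A$ of degree $<r$, the first step is to compute the matrix $M_A$ of $\varepsilon(A)$ in the normal basis: its $j$-th column consists of the coordinates of $A(b_j)$ in $(b_0,\ldots,b_{r-1})$. By Proposition~\ref{prop:evalbasis}, the tuple $(A(b_0),\ldots,A(b_{r-1}))$ is the coefficient sequence of the product $\tilde A(T)\cdot B(T) \bmod (T^r-1)$ in $L[T]$. A fast polynomial multiplication in $L[T]$ of degree $<r$ takes $\tilde O(r)$ operations in $L$, hence $\tilde O(r^2)$ operations in $K$, producing the $A(b_j)$ in the working basis; a single change of basis via $\Omega^{-1}$, viewed as the product of two $r\times r$ matrices over $K$, then expresses these vectors in the normal basis in $O(r^\omega)$ operations. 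The same procedure yields $M_B$, and $M_A M_B$ is computed in $O(r^\omega)$ operations in $K$, producing the matrix $M_D$ of $\varepsilon(D)$ where $D = AB \bmod (X^r-1)$.

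For the inverse transform, I would first pass the columns of $M_D$ back to the working basis by multiplying by $\Omega$ in $O(r^\omega)$ operations, obtaining the polynomial $C_D(T) = \sum_j D(b_j)\,T^j \in L[T]$. Proposition~\ref{prop:evalbasis} then yields $\tilde D(T) = C_D(T)\cdot B(T)^{-1} \bmod (T^r-1)$, from which the coefficients of $D$ are read off directly. The inverse of $B(T)$ modulo $T^r-1$ in $L[T]$ exists because the multiplication-by-$B$ map on $L[T]/(T^r-1)$ corresponds through Proposition~\ref{prop:evalbasis} to the evaluation map $\tilde A \mapsto (A(b_j))_j$, which is an isomorphism by Lemma~\ref{lem:iso1}; this inverse can be precomputed once via the extended Euclidean algorithm in $L[T]$ at cost $\tilde O(r^2)$ operations in $K$, and the final polynomial multiplication costs another $\tilde O(r^2)$.

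Summing all contributions yields $O(r^\omega)+\tilde O(r^2) = O(r^\omega)$, since $\omega > 2$. I do not expect any serious obstacle: the only genuinely non-routine point is verifying the invertibility of $B(T)$ modulo $T^r-1$, which follows directly from the bijectivity of $\varepsilon$. The rest consists of packaging standard polynomial and matrix multiplications at the stated costs.
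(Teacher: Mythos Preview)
Your proof is correct and follows essentially the same route as the paper: compute the evaluations $A_i(b_j)$ via the polynomial product $\tilde A_i(T)B(T)\bmod(T^r-1)$, multiply the resulting matrices, and recover the product by dividing by $B(T)$ modulo $T^r-1$. The only cosmetic difference is that the paper leaves each $M_i$ as a mixed-basis matrix (domain in the normal basis, codomain in the working basis) and inserts a single $\Omega$ in the middle, computing $M_1\Omega M_2$ directly, whereas you convert each $M_i$ fully to the normal basis via $\Omega^{-1}$ before multiplying; this saves a constant number of $r\times r$ matrix products but does not change the asymptotic $O(r^\omega)$ bound.
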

\begin{proof}

Let $A_1,A_2 \in L[X,\sigma]/(X^r {-} 1)$. Let $\tilde A_1(T), \tilde 
A_2(T) \in L[T]$ be the commutative polynomials with the same 
coefficients as $A_1, A_2$ respectively. Let $C_1(T) = \tilde A_1(T)B(T) 
\in L[T]/(T^r {-} 1)$ and $C_2(T) = \tilde A_2(T)B(T) \in L[T]/(T^r {-} 
1)$. Both $C_1$ and $C_2$ can be computed in $\tilde O(r^2)$ operations 
in $K$. Now let $M_1$ (resp. $M_2$) be the matrix whose $j$-th column is 
the decomposition of the $j$-th coefficient of $C_1$ (resp. $C_2$) in 
the working basis. By Proposition \ref{prop:evalbasis}, 
$M_1$ (resp $M_2$) is the matrix of $\varepsilon(A_1)$ (resp. 
$\varepsilon(A_2)$) where the codomain in endowed with the normal basis
and the codomain is endowed with the working basis. Set $M = M_1
\Omega M_2$; this product
can be computed within $O(r^\omega)$ operations in $K$. We know that $M$ is the 
matrix of $\varepsilon(A_1A_2)$ where again
the codomain in endowed with the normal basis
and the codomain is endowed with the working basis.
Let
$$C(T) = \begin{pmatrix} b_0 & b_1 & \cdots & 
b_{r-1}\end{pmatrix}M\begin{pmatrix} 1 \\ T \\ \vdots \\ 
T^{r-1}\end{pmatrix},$$
and compute $\tilde A(T) = C(T)B(T)^{-1} \pmod{T^r {-} 1} = 
\sum_{i=0}^{r-1} a_i T^i$, which can also be computed in $\tilde O(r^2)$ 
operations in $K$. Then, again by Proposition \ref{prop:evalbasis}, 
$A_1A_2 = \sum_{i=1}^{r-1} a_i X^i$. This shows that the global 
complexity of this computation is $O(r^\omega)$.
\end{proof}

In Section \ref{sec:fastmult}, we will generalize this algorithm and show how it yields a fast multiplication algorithm for skew polynomials (not only in the modular case).

 

\subsection{Evaluation and interpolation at\\
 an incomplete normal basis}
\label{subs:evalincnorm}\label{subs:interp_small}

\smallskip

\noindent
\textbf{Evaluation.}
We shall see later how we can compute the product of two skew 
polynomials of small degree $d$ by determining how their product acts on 
$2d$ elements of a normal basis. With this motivation in mind, let us 
describe how we can compute efficiently the image of the first few 
elements of a normal basis under the action of the skew polynomial $A 
\in L[X,\sigma]$. Recall that, using Proposition \ref{prop:evalbasis} 
with $\lambda = 1$, and writing $B(T) = \sum_{i=0}^{r-1} b_iT^i$, we 
know that
$$ \tilde A(T)B(T) \equiv C(T) \pmod{T^r {-} 1},$$
where $C(T) = \sum_{i=0}^{r-1} A(b_i)T^i$. Let $n <r$, and let $A \in L[X,\sigma]$ of degree $n$. We are interested in computing only $A(b_i)$ for $0 \leq i \leq n-1$.
\begin{lem}\label{lem:firstcoeffs}
Let $A \in L[X,\sigma]$ of degree $n$ and let $c_i = A(b_i)$ for $0 \leq i \leq n-1$. Let $U(T) = \sum_{i=0}^n b_iT^i$ and $V(T)  = \sum_{i=0}^n b_{r-i}T^{r-i}$. Then, for $0 \leq i \leq n$:
$$ c_i = \gamma_i + \tilde \gamma_i,$$
where $\tilde A(T)U(T)= \sum_{i=0}^{r-1} \gamma_iT^i$ and $\tilde A(T)V(T) = \sum_{i=0}^{r-1} \tilde\gamma_iT^i$ (the products being taken modulo $T^r {-} 1$).
\end{lem}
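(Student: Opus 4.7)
The plan is to derive the lemma directly from Proposition~\ref{prop:evalbasis}, which identifies $c_i$ with the coefficient of $T^i$ in $\tilde{A}(T) B(T) \bmod (T^r {-} 1)$. The strategy is to decompose $B(T)$ into three blocks according to the degrees that can possibly interact with a polynomial of degree $\leq n$: the head $U(T) = \sum_{i=0}^{n} b_i T^i$, the tail (which corresponds to $V(T)$ after the identification $b_r T^r \equiv b_0$), and a middle block $M(T) = \sum_{i=n+1}^{r-n-1} b_i T^i$. The goal is to show that the middle block contributes nothing to the coefficients in positions $0, 1, \ldots, n$ of the product modulo $T^r {-} 1$.

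First, I would apply Proposition~\ref{prop:evalbasis} to write $c_i = [T^i]\bigl(\tilde{A}(T) B(T) \bmod (T^r {-} 1)\bigr)$ for $0 \leq i \leq n$. Next, I would observe that because $\deg \tilde{A} \leq n$ and $M(T)$ is supported on $T^{n+1}, \ldots, T^{r-n-1}$, the product $\tilde{A}(T) M(T)$ in $L[T]$ is supported on $T^{n+1}, \ldots, T^{r-1}$; in particular its degree is strictly less than $r$, so no coefficient folds back into the range $[0, n]$ after reduction modulo $T^r {-} 1$. Consequently, for $0 \leq i \leq n$,
$$ c_i \;=\; [T^i]\bigl(\tilde{A}(T)\,(U(T)+V(T)) \bmod (T^r {-} 1)\bigr) \;=\; \gamma_i + \tilde{\gamma}_i, $$
by linearity and the definitions of $\gamma_i$ and $\tilde{\gamma}_i$.

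The main obstacle, and the only step that really requires care, is the bookkeeping at the boundary. The head $U(T)$ and the tail $V(T)$ both touch the basis element $b_0$: $U$ directly as its constant coefficient, and $V$ through the identification $b_r T^r \equiv b_0 \pmod{T^r{-}1}$. To make the identity $B \equiv U + V - M' \pmod{T^r{-}1}$ correctly reproduce the coefficients of $B$ in the range $\{0,\ldots,n\} \cup \{r-n,\ldots,r-1\}$ relevant for the computation of $c_i$, one needs to check the index $0$ carefully (since $\tilde{A}(T)$ convolved with $b_0 T^0$ does feed into every position $[T^i]$ with $i \leq n$). Once this case is verified term by term, the rest is a routine convolution calculation.
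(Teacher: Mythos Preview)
Your approach is the same as the paper's: both start from the cyclic-convolution identity $c_i=\sum_{j+j'\equiv i\,(r)}a_jb_{j'}$ coming from Proposition~\ref{prop:evalbasis} and then split the range of $j'$ into a ``head'' $\{0,\ldots,n\}$ and a ``tail'' $\{r{-}n,\ldots,r{-}1\}$, observing that the middle range cannot contribute to positions $0,\ldots,n$ because $\deg\tilde A\le n$. The paper just writes out the two partial sums explicitly rather than phrasing it as a decomposition $B=U+M+\text{tail}$.

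There is, however, a genuine gap at exactly the place you flag as ``needing care'' but then dismiss as routine. Carry out the check: modulo $T^r-1$ one has $V(T)\equiv b_0+\sum_{k=r-n}^{r-1}b_kT^k$, so $U+V\equiv B-M+b_0$, not $B-M$. Hence for $0\le i\le n$,
\[
\gamma_i+\tilde\gamma_i \;=\; [T^i]\bigl(\tilde A(B-M+b_0)\bigr)\;=\;c_i+a_ib_0,
\]
which is \emph{not} $c_i$ in general. The identity holds if the sum defining $V$ starts at $i=1$ instead of $i=0$; with that correction your argument (and the paper's) goes through verbatim. The paper's own proof has a matching slip: the second sum should run over $j'\in\{r{-}n,\ldots,r{-}1\}$ rather than $\{r{-}i,\ldots,r{-}1\}$ (take $r=10$, $n=4$, $i=1$ to see that the stated range drops the terms $a_3b_8+a_4b_7$). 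None of this affects Corollary~\ref{cor:eval_small}, since the point is only that two products of degree-$n$ polynomials in $L[T]$ suffice; but you should not wave the boundary term away---it does not vanish.
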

\begin{proof}
Since $c_i = \sum_{j+j' = i \pmod r} a_{j}b_{j'}$, and $a_j = 0$ for $j>n$, we are left with the formula:
$$ c_i = \sum_{j' = 0}^{i} a_{i-j'}b_{j'} +\sum_{j' = r-i}^{r-1} a_{i-j' +r} b_{j'},$$
and both sums correspond precisely to the coefficients of $\tilde AU$ and $\tilde AV$ respectively.
\end{proof}
\begin{cor}\label{cor:eval_small}
Let $A \in L[X,\sigma]$ of degree $\leq n$, then the collection of $A(b_0), \ldots, A(b_{n-1})$ can be computed in $\tilde{O}(rn)$ operations in $K$.
\end{cor}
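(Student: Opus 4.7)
The plan is to invoke Lemma~\ref{lem:firstcoeffs}, which reduces the computation of $A(b_0), \ldots, A(b_{n-1})$ to the computation of the first $n$ coefficients of the two products $\tilde A(T)\,U(T)$ and $\tilde A(T)\,V(T)$ modulo $T^r {-} 1$, followed by $n$ additions in $L$. Each of these two modular products can in turn be obtained from a dense polynomial multiplication in $L[T]$ of degree~$O(n)$.

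For the first product, both $\tilde A$ and $U$ lie in $L[T]$ with degree $\leq n$, so their product has degree $\leq 2n$ and is computed in $\tilde O(n)$ operations in $L$ using fast polynomial multiplication; reducing the outcome modulo $T^r {-} 1$ adds at most $O(n)$ further additions in $L$ to fold back coefficients of degree $\geq r$. For the second product, the polynomial $V$ is the delicate point: as defined it reaches degree $r$, but modulo $T^r {-} 1$ it rewrites as
\[
V(T) \equiv b_0 + T^{r-n}\, W(T), \qquad W(T) = \sum_{j=0}^{n-1} b_{r-n+j}\, T^j \in L[T],
\]
because the $i=0$ summand contributes $b_r T^r \equiv b_0$ and the remaining summands factor through $T^{r-n}$. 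Consequently
\[
\tilde A(T)\,V(T) \equiv b_0\, \tilde A(T) + T^{r-n}\, \tilde A(T)\, W(T) \pmod{T^r {-} 1},
\]
and the nontrivial part $\tilde A(T)\,W(T)$ is again a product of two polynomials of degree $\leq n$ in $L[T]$, hence computable in $\tilde O(n)$ operations in $L$; the shift by $T^{r-n}$ and the modular reduction are just reindexings.

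Putting everything together, the whole procedure runs in $\tilde O(n)$ operations in $L$. Since by our standing assumption one operation in $L$ costs $\tilde O(r)$ operations in $K$, the total complexity is $\tilde O(rn)$, as claimed. The main point that needs care is the rewriting $V(T) \equiv b_0 + T^{r-n} W(T)$: naively treating $V$ as a length-$r$ polynomial with $n{+}1$ nonzero coefficients would force a cyclic convolution in $L[T]/(T^r{-}1)$ of cost $\tilde O(r)$ operations in $L$, which would collapse the claimed bound when $n \ll r$. Recognising that the product can be realised as a genuine degree-$n$ multiplication is the key trick that delivers the linear-in-$n$ cost.
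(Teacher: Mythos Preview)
Your proof is correct and follows the same route as the paper: invoke Lemma~\ref{lem:firstcoeffs} and reduce to two degree-$n$ polynomial multiplications in $L[T]$, yielding $\tilde O(n)$ operations in $L$ and hence $\tilde O(nr)$ operations in $K$. The paper's proof is a single sentence that simply asserts ``two multiplications of (classical) polynomials of degree~$n$'', whereas you have spelled out the one nontrivial point it leaves implicit, namely that the product with $V$ (which as written lives in degree~$r$) can be realised via the rewriting $V(T)\equiv b_0 + T^{r-n}W(T)$ as a genuine degree-$n$ multiplication.
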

\begin{proof}
By Lemma \ref{lem:firstcoeffs}, the evaluation of $A$ at $b_0, \dots, b_{n-1}$ can be obtained by two multiplications of (classical) polynomials of degree $n$ with coefficients in $L$, hence with complexity $\tilde{O}(nr)$ operations in $K$.
\end{proof}


\noindent
\textbf{Interpolation.}
Still bearing in mind the aim of multiplying two skew polynomials by composing the corresponding linear maps, we are interested in the following question of interpolation: given $n$ values $\alpha_0, \dots, \alpha_{n-1} \in L$, find $A \in L[X,\sigma]$ of degree $n$ such that $A(b_i) = \alpha_i$ for all $0\leq i \leq n-1$.

Let us explain first how the solution to this problem can be computed when $\alpha_0 = \cdots = \alpha_{n-1} = 0$. In this case, the skew polynomial we are looking for is the so-called \emph{minimal subspace polynomial} corresponding to the span $\left< b_{0}, \dots, b_{n-1}\right>$. 
A generic fast algorithm for solving this problem has been proposed by
Puchinger and Wachter-Zeh in~\cite{PuWa16}, Theorem 26; it has complexity
$\tilde O(n^{\max\{\log_2(3), \frac{\omega+1}{2}\}}r)$ operations in $K$.
In the special case we are considering, we shall see that this can be 
improved to $\tilde{O}(nr)$.

Let $B_n(T) = \sum_{i=0}^{r-1} b_{i+n}T^i$, so that $B_n(T) \equiv T^{-n}B(T) \pmod{T^r {-} 1}$. If $A$ is such that $A(b_i) = 0$ for $0 \leq i \leq n-1$, then there exists $Q \in L[T]$ of degree $\leq r{-}n{-}1$ such that $\tilde A(T)B(T) \equiv T^n Q(T) \pmod{T^r {-} 1}$. Of course, the converse is also true, and this equation is equivalent to:
$$ \tilde A(T)B_n(T) \equiv Q(T) \pmod{T^r {-} 1},$$ 
with $\deg \tilde A \leq n$ and $\deg Q \leq r-n-1$. The latter equation can be solved thanks to the extended Euclidean algorithm. Indeed, computing the gcd of $T^r {-} 1$ and $B_n(T)$ and stopping after the first remainder of degree $<r-i$, we get a relation of the form:
$$ U_i(T)B_n(T) + V_i(T)(T^r {-} 1) = Q_i(T),$$
with $\deg U_i \leq i$ and $\deg Q_i \leq r-1-i$, which yields a solution to the problem when $i=n$.
This computation can be done in $\tilde O(nr)$ operations in $K$ thanks to the half-gcd algorithm (see \cite{GaGe03}, Theorem 11.5).

\smallskip

In the general case, let $\alpha_0, \ldots, \alpha_{n-1} \in L$, and let $C(T) = \sum_{i=0}^{n-1} \alpha_iT^i$. We are looking for $A \in L[X, \sigma]$ with degree $\leq n$ and $Q \in L[T]$ with degree $\leq r - n - 1$ such that $A(T) B(T) \equiv C(T) + T^n Q(T) \pmod{T^r {-} 1}$. This equation is equivalent to $A(T)B_n(T) \equiv \sum_{i=0}^{n-1} \alpha_{i}T^{r-n+i} \pmod{T^r {-} 1}$.

\begin{lem}\label{lem:normalsequence}
Let $R_0(T) = T^r {-} 1$, $R_1(T) = B(T)$ and for $i \geq 2$, let $R_i$ be the remainder of the Euclidean division of $R_{i-2}$ by $R_{i-1}$. Then for $0 \leq i \leq r$, $\deg R_i = r-i$.
\end{lem}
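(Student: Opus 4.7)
I would argue by strong induction on $i$. The base cases $i=0$ and $i=1$ are immediate (the leading coefficient of $B$ is $b_{r-1}$, which is nonzero as part of a basis). Assume now that $\deg R_j = r-j$ for all $j<i$. The Euclidean step forces $\deg R_i \le r-i$ directly, so the real content is the lower bound $\deg R_i \ge r-i$. To establish it, I would work with the B\'ezout cofactor $\mu_i \in L[T]$ defined by $R_i \equiv \mu_i B \pmod{T^r-1}$, together with the standard recurrence $\mu_i = \mu_{i-2}-Q_{i-1}\mu_{i-1}$. Under the inductive hypothesis, each quotient $Q_j$ has degree $\deg R_{j-1}-\deg R_j = 1$, which together with a parallel induction yields $\deg \mu_i = i-1$ exactly. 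Moreover $\mu_i \ne 0$: otherwise $R_i$ would be a multiple of $T^r-1$ of degree less than $r$, hence zero, contradicting the fact that $B$ is a unit in $L[T]/(T^r-1)$---a direct consequence of Lemma~\ref{lem:iso1} and Proposition~\ref{prop:evalbasis} (multiplication by $B$ is injective, hence bijective, on this finite-dimensional quotient).

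It thus suffices to prove the following key claim: \emph{for every $k \in \{0,\ldots,r-1\}$ and every nonzero $U \in L[T]$ of degree at most $k$, the reduction $UB \bmod (T^r-1)$ has degree at least $r-k-1$.} Applied with $k=i-1$ and $U=\mu_i$, this gives $\deg R_i \ge r-i$ and closes the induction. To prove the claim, write $U = \sum_{j=0}^{k} u_j T^j$ and set $A_U = \sum_{j=0}^{k} u_j X^j \in L[X,\sigma]$. By Proposition~\ref{prop:evalbasis}, the coefficient of $T^\ell$ in $UB \bmod (T^r-1)$ is $A_U(b_\ell)$. Assuming for contradiction that this product has degree at most $r-k-2$, one obtains the system $A_U(b_{r-k-1+s}) = 0$ for $s = 0,\ldots,k$. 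Using the identity $b_{r-k-1+s} = \sigma^{k-s}(b_{r-1})$, setting $\alpha_j = \sigma^j(b_{r-1})$, and reindexing via $m = k-s$, the system becomes
$$\sum_{j=0}^{k} u_j\, \sigma^m(\alpha_j) = 0 \quad \text{for } m = 0,\ldots,k,$$
i.e., $M\vec u = 0$ where $M = (\sigma^m(\alpha_j))_{0 \le m,j \le k}$ is a Moore matrix.

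The main obstacle is then to show $\det M \ne 0$. This is the content of the classical Moore determinant theorem, equivalent to Artin's lemma on the independence of the characters $1,\sigma,\ldots,\sigma^k$: for $v_0,\ldots,v_k \in L$ and $K = L^\sigma$, the matrix $(\sigma^m(v_j))_{m,j}$ is invertible if and only if $v_0,\ldots,v_k$ are $K$-linearly independent. In our situation $(\alpha_0,\ldots,\alpha_k) = (b_{r-1},b_{r-2},\ldots,b_{r-1-k})$ is a subfamily of the normal basis, and therefore $K$-linearly independent. Consequently $\det M \ne 0$, which forces $\vec u = 0$ and $U = 0$, contradicting our choice of $U$. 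This proves the key claim and completes the induction.
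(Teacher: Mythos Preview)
Your proof is correct, and its heart coincides with the paper's: both arguments reduce to the statement that a nonzero skew polynomial $A_U\in L[X,\sigma]$ of degree at most $k$ cannot vanish on $k{+}1$ elements of the normal basis. The paper phrases this via the minimal subspace polynomial of $\langle b_{r-i},\ldots,b_{r-1}\rangle$ having degree~$i$; you phrase it via invertibility of the Moore matrix $(\sigma^m(\alpha_j))$. These are equivalent formulations of the same fact.

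Where the two proofs differ is in the wrapper around this core. The paper invokes a subresultant-theoretic criterion: it introduces the linear map $\varphi_i:(U,V)\mapsto UR_1+VR_0$ on suitable truncated polynomial spaces and cites \cite{Wi96} for the equivalence $\det\varphi_i\ne 0\Leftrightarrow \deg R_i=r-i$, thereby avoiding any explicit induction or tracking of B\'ezout cofactors. You instead perform the induction directly, controlling $\deg\mu_i=i-1$ step by step and then applying your key claim to $U=\mu_i$. Your route is more self-contained (no external reference needed for the subresultant equivalence), at the cost of the cofactor bookkeeping; the paper's route is shorter but leans on a black box. Two minor remarks: once you have established $\deg\mu_i=i-1$ exactly, the separate argument that $\mu_i\ne 0$ is redundant; and your attribution of the Moore determinant result to ``Artin's lemma'' is a little loose, since Artin's lemma proper concerns the full family $1,\sigma,\ldots,\sigma^{r-1}$, whereas the truncated Moore statement is the (standard, but distinct) refinement you actually need.
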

\begin{proof}
Consider the map 
$$
\begin{array}{rcl}
\varphi_i : \quad 
L[T]_{<i} \times L[T]_{< i-1} &\longrightarrow &L[T]_{<r + i -1}/L[T]_{< r- i}\\
(U,V) & \mapsto & UR_1 + V R_0
\end{array}
$$
It is well-defined, linear, and both sides have the same dimension over $L$. Moreover, the determinant of this map is nonzero if and only if $\deg R_i = r-i$ (see \cite{Wi96}, \S 4.1). 
Therefore, it is sufficient to prove that $\varphi_i$ is injective.

Let us consider $(U,V)$ in the kernel of $\varphi_i$. 
By definition, $\deg (UR_1 + VR_0) < r-i$, so that $U(T)B(T) \equiv W(T) \pmod{T^r {-} 1}$, where $\deg W(T) < r-i$. By Proposition \ref{prop:evalbasis}, the skew polynomial $\sum_{j=0}^{i-1} u_j X^j$ (whose coefficients are the coefficients of $U$) evaluates to $0$ at $b_{r-i}, \dots, b_{r-1}$. Hence, it is a left multiple of the minimal subspace polynomial $M$ of $\left< b_{r-i}, \dots, b_{r-1}\right>$. Since $(b_{r-i}, \dots, b_{r-1})$ is linearly independent over $K$, $M$ has degree $i$ (it is a generator of the kernel of the $K$-linear map $L[X, \sigma] \rightarrow L^i$ mapping $P$ to $(P(b_{r-i}), \dots, P(b_{r-1}))$). In particular, since $\deg U < i$, $U = 0$, so $V = 0$ and $\varphi_i$ is injective. Hence $\det \varphi_i \neq 0$ and $R_i$ has the required degree.
\end{proof}
\begin{thm}\label{thm:interp_small}
Let $n\leq r$ and $\alpha_0, \ldots, \alpha_{n-1} \in L$. Then there exists $U,V, H \in L[T]$, with $\deg U \leq n-1$, $\deg V \leq n$ and $\deg H \leq r - n$ such that
$$ U(T^r {-} 1) + VB(T) = H(T) + T^{r-n +1}(\alpha_0 + \cdots + \alpha_{n-1}T^{n-1}).$$
Moreover, Algorithm \textrm{\tt SmallDegreeInterpolation} outputs $U$ and $V$ 
for a cost of $\tilde O(rn)$ operations in $K$.
\end{thm}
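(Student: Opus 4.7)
The plan is to prove existence via a dimension/Bezout argument leveraging Lemma~\ref{lem:normalsequence}, and then to realize the claimed complexity through a fast partial half-gcd applied to $(T^r{-}1, B(T))$.

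First, I would reformulate the statement as a question about an $L$-linear map. Consider
$$\Phi \colon L[T]_{<n} \times L[T]_{\le n} \longrightarrow L[T]_{\le r+n-1}, \qquad \Phi(U,V) = U(T^r{-}1) + VB.$$
The theorem asks for $(U,V)$ in the source such that $\Phi(U,V)$ has coefficients $\alpha_0, \ldots, \alpha_{n-1}$ at degrees $r{-}n{+}1, \ldots, r$ and vanishing coefficients at degrees $r{+}1, \ldots, r{+}n{-}1$; the polynomial $H$ is then read off as the truncation of $\Phi(U,V)$ to degrees $\le r{-}n$, and automatically satisfies the required bound. By Lemma~\ref{lem:normalsequence}, $\deg R_r = 0$, so $\gcd(T^r{-}1, B) = 1$ in $L[T]$; assuming $n<r$ (the case $n=r$ is similar), this makes $\Phi$ injective and $\mathrm{Im}\,\Phi$ of dimension $2n+1$ over $L$.

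The core step is surjectivity of $\pi \circ \Phi$, where $\pi$ extracts the $2n{-}1$ coefficients at degrees $r{-}n{+}1, \ldots, r{+}n{-}1$. The kernel of $\pi \circ \Phi$ consists of pairs $(U,V)$ in the restricted source with $\Phi(U,V) \in L[T]_{\le r-n}$. Lemma~\ref{lem:normalsequence} tells us that the extended Euclidean remainder sequence of $(T^r{-}1, B)$ is normal: the consecutive Bezout pairs $(U_n, V_n)$ and $(U_{n+1}, V_{n+1})$ (producing remainders of degrees $r{-}n$ and $r{-}n{-}1$) lie inside the restricted source, and by the subresultant correspondence for normal sequences (cf.\ \cite{Wi96}, \S 4.1) they span the entire kernel of $\pi\circ\Phi$. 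Thus $\ker(\pi \circ \Phi)$ has dimension exactly $2$, so $\pi \circ \Phi$ is surjective by dimension count. A preimage of $(\alpha_0, \ldots, \alpha_{n-1}, 0, \ldots, 0) \in L^{2n-1}$ then yields the desired $(U,V,H)$.

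For the algorithm, \textrm{\tt SmallDegreeInterpolation} applies the fast half-gcd to $(T^r{-}1, B(T))$ up to depth $n{+}1$, obtaining the two consecutive Bezout pairs $(U_n, V_n), (U_{n+1}, V_{n+1})$ and the remainders $R_n, R_{n+1}$ in $\tilde O(n)$ operations in $L$, i.e.\ $\tilde O(nr)$ in $K$, by \cite{GaGe03}, Theorem~11.5. The $\alpha_i$'s are then incorporated via a structured linear correction, determined by a Hankel-type system whose non-degeneracy again follows from Lemma~\ref{lem:normalsequence} and which can be solved within the same asymptotic budget. The main obstacle is the surjectivity of $\pi \circ \Phi$, i.e.\ checking that the kernel has exactly the expected dimension $2$; this is the essential content of the normality statement, and is what makes Lemma~\ref{lem:normalsequence} the right preparatory result for this theorem.
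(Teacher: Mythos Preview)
Your existence argument is correct but more circuitous than the paper's. The paper observes directly that since $\deg R_i = r-i$ (Lemma~\ref{lem:normalsequence}), the family $R_0, \ldots, R_{n-1}$ has leading terms sitting at the consecutive degrees $r, r{-}1, \ldots, r{-}n{+}1$; a triangular back-substitution then produces constants $c_0,\ldots,c_{n-1}\in L$ with $\sum_i c_i R_i$ having the prescribed coefficients $\alpha_j$ in those positions, and writing $\sum_i c_i R_i = (\sum_i c_i U_i)R_0 + (\sum_i c_i V_i)R_1$ gives $U,V$ with the stated degree bounds immediately. Your route via a dimension count on $\pi\circ\Phi$ works, but the kernel analysis is heavier than needed: surjectivity of $\pi\circ\Phi$ already follows in one line from the bijectivity of $\varphi_{n+1}$ in Lemma~\ref{lem:normalsequence} (your $\Phi$ is $\varphi_{n+1}$ after swapping arguments, and your $\pi$ is a further quotient of its target), so invoking the subresultant correspondence to pin down the kernel is unnecessary. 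What your formulation buys is a clean linear-algebra picture; what the paper's buys is an explicit construction that maps directly onto the algorithm.

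On the algorithmic side, your description does not match what \texttt{SmallDegreeInterpolation} actually does. The algorithm does not run a plain half-gcd and then solve a separate ``Hankel-type system'' as post-processing; it interleaves the $\alpha_i$-correction into the half-gcd recursion itself, carrying at each level both the usual $2\times 2$ transition matrix $M$ and an auxiliary $1\times 2$ row $N$ encoding the partial combination whose top coefficients already match the $\alpha_i$'s processed so far. That integrated recursion is what yields $\tilde O(nr)$ directly. Your post-processing step is left too vague to certify the same bound: you would need to explain concretely what the Hankel system is and how it is solved in $\tilde O(n)$ operations in $L$ without materializing all $n$ remainders $R_0,\ldots,R_{n-1}$.
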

\begin{proof}[Sketch of the proof]
The result follows from the correctness of Algorithm \ref{algo:interp_small}, but is also a theoretical consequence of Lemma \ref{lem:normalsequence}. Indeed, this lemma shows that there exists a linear combination of $R_0 = T^r {-} 1$, $R_1 = B(T), \ldots, R_{n-1}$ whose higher degree terms have coefficients $c_0, \ldots, c_{n-1}$, and the bounds on the degrees follow from the fact that for $i \leq n$, $R_i = U_iR_0 + V_iR_1$ with $\deg U_i \leq i-1$, $\deg V_i \leq i$.
Algorithm \ref{algo:interp_small} is an adaptation of the half-gcd algorithm, which computes simultaneously the sequence of the remainders in the extended Euclidean division or $R_0$ and $R_1$, and the combination of $R_1$ and $R_0$ that has the given higher degree terms.
\end{proof}

\begin{algorithm}[h]\label{algo:interp_small}
\KwIn{$R_0, R_1 \in L[T]$, $a_0, \dots, a_{k-1} \in L$, with $k \leq n_0$}
\KwOut{$M \in L[T]^{2 \time 2}$ such that $M\begin{pmatrix} R_0\\ R_1 \end{pmatrix} = \begin{pmatrix} R_{k-1} \\ R_k \end{pmatrix}$ and 
$N \in L[T]^{1 \times 2}$ such that $N\begin{pmatrix} R_0\\ R_1 \end{pmatrix} = S_k$ with $\deg S_k - (a_{k-1} + a_{k-2}T + \cdots + a_0T^{k-1})T^{n_0-k+1} \leq n_0-k$ }
$h:=\lfloor k/2 \rfloor$\\
$\tilde R_0 = R_0 \text{ quo } T^{2h}$, $\tilde R_1 = R_1 \text{ quo } T^{2h-1}$\\
$M_1, N_1 = \texttt{SmallDegreeInterpolation}(\tilde R_0, \tilde R_1, a_0, \cdots a_{h-1})$\\
$\begin{pmatrix}
R_{h-1}\\R_h 
\end{pmatrix} := M_1\begin{pmatrix}
R_{0}\\R_1 
\end{pmatrix},~
S := N_1\begin{pmatrix}
R_{0}\\R_1 
\end{pmatrix} = \sum_{i=0}^{n_0-h} s_iT^i + \sum_{i=0}^{h-1}a_i T^{n_0-i}$\\
Make the Euclidean divisions:\\
$R_{h-1}  = Q_hR_{h} + R_{h+1}$\\
$R_{h-1} - a_{h}T^{n_0-h} = \tilde Q_h R_h + \tilde R_{h+1}$\\
$M_2, N_2 = \texttt{SmallDegreeInterpolation}(R_{h}, R_{h+1}, a_{h+1}-s_{n_0-h}, \cdots, a_{2h}-s_{n_0-2h+1})$\\
\Return{$M_2\begin{pmatrix} 0& 1\\ 1 & -Q_h \end{pmatrix}M_1, N_1 + \begin{pmatrix} 1 & -\tilde Q_h \end{pmatrix}M_1+ M_2N_2$}
\caption{\texttt{SmallDegreeInterpolation}}
\end{algorithm}
Thanks to Corollary \ref{cor:eval_small}, Theorem \ref{thm:interp_small} and Algorithm \ref{algo:interp_small}, we can solve the problem of evaluation and interpolation at the first $n$ elements of an incomplete normal basis in $\tilde O(nr)$ operations in $K$.

\section{Fast multiplication}\label{sec:fastmult}

In this section, we study the problem of multiplying efficiently two 
elements $A_1, A_2 \in L[X,\sigma]$ both of degree $\leq d$. The 
complexity is the number of operations in $K$, given as a function of 
$d$ and $r = \dim_K L$.

\subsection{Modular multiplication}

\subsubsection{Multiplication modulo $X^r{-}a$} 
\label{subs:fastmod}

We consider the ring $L[X,\sigma]$. Let $\lambda \in L^\times$, and let $a = N_{L/K}(\lambda)$. We are now going to describe an algorithm for multiplication in $L[X,\sigma]$ modulo $X^r {-} a$.
\begin{prop}\label{prop:modmulta}
The map 
$$\begin{array}{rcl}
L[X,\sigma]& \longrightarrow &L[X,\sigma]/(X^r {-} 1)\\
A(X) = \sum a_i X^i & \mapsto & A(\lambda X) = \sum_{i} \lambda\sigma(\lambda)\cdots \sigma^{i-1}(\lambda) a_iX^i
\end{array}
$$
factors as an isomorphism $L[X,\sigma]/(X^r {-} a) \simeq L[X,\sigma]/(X^r {-} 1)$.
\end{prop}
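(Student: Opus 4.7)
The plan is to exhibit the map in the proposition as coming from a ring endomorphism $\varphi$ of $L[X,\sigma]$ that substitutes $\lambda X$ for $X$, and then verify that $\varphi$ descends to an isomorphism between the claimed quotients by producing an explicit inverse.

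First I would define $\varphi : L[X,\sigma] \to L[X,\sigma]$ to be the unique additive map that is the identity on $L$ and sends $X$ to $\lambda X$, extended multiplicatively. The point requiring verification is that this is well-defined as a ring homomorphism, i.e.\ that it respects the relation $X\alpha = \sigma(\alpha)X$: one checks $\varphi(X)\varphi(\alpha) = \lambda X \alpha = \lambda \sigma(\alpha) X = \sigma(\alpha)\lambda X = \varphi(\sigma(\alpha))\varphi(X)$, where commutativity of $L$ is used at the third equality. Consequently $\varphi(X^i) = (\lambda X)^i$, and a short induction using $X^j\lambda = \sigma^j(\lambda)X^j$ yields
$$ (\lambda X)^i = \lambda\,\sigma(\lambda)\cdots\sigma^{i-1}(\lambda)\,X^i, $$
so $\varphi$ matches the formula given in the statement.

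Next I would specialize to $i = r$: since $\sigma$ has order $r$ and $K = L^\sigma$, the product $\lambda\sigma(\lambda)\cdots\sigma^{r-1}(\lambda)$ is precisely $N_{L/K}(\lambda) = a$. Hence $\varphi(X^r - a) = aX^r - a = a(X^r - 1)$, which lies in the two-sided ideal $(X^r - 1)$ (note $a \in K$ is central). Therefore $\varphi$ factors through the quotient to give a well-defined $K$-algebra morphism
$$ \bar\varphi : L[X,\sigma]/(X^r - a) \longrightarrow L[X,\sigma]/(X^r - 1), $$
which coincides with the map described in the proposition.

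Finally, to get an isomorphism I would construct the inverse in the same way, defining $\psi : L[X,\sigma] \to L[X,\sigma]$ by the identity on $L$ and $X \mapsto \lambda^{-1}X$ (well-defined by the identical verification, using $\lambda \in L^\times$). The same computation gives $\psi(X^r - 1) = a^{-1}(X^r - a)$, so $\psi$ descends to $\bar\psi : L[X,\sigma]/(X^r-1) \to L[X,\sigma]/(X^r-a)$. Since $\psi\circ\varphi$ and $\varphi\circ\psi$ both fix $L$ and send $X$ to $X$, they are the identity on $L[X,\sigma]$, so $\bar\psi$ and $\bar\varphi$ are mutually inverse. There is no real obstacle here; the only step requiring care is checking that the substitution $X \mapsto \lambda X$ is compatible with the skew relation, and the key identity is the norm computation $(\lambda X)^r = N_{L/K}(\lambda)\,X^r$.
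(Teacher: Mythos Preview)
Your proof is correct and follows essentially the same approach as the paper: the key identity is $(\lambda X)^r = N_{L/K}(\lambda)\,X^r = aX^r$, so $X^r - a \mapsto a(X^r - 1)$. The paper's proof is a one-liner recording exactly this computation; you have simply filled in the surrounding details (verifying that $X \mapsto \lambda X$ respects the skew relation and exhibiting the explicit inverse $X \mapsto \lambda^{-1}X$) that the paper leaves implicit.
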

\begin{proof}
This maps $X^r$ to $\lambda\sigma(\lambda)\cdots\sigma^{r-1}(\lambda)X^r = aX^r$, thus mapping $X^r {-} a$ to $a(X^r {-} 1)$.
\end{proof}
\begin{cor}\label{cor:modmulta}
Multiplication in $L[X,\sigma]/(X^r {-} a)$ can be performed in $O(r^\omega)$ operations in $K$.
\end{cor}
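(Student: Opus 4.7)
The plan is to transport the multiplication problem across the isomorphism provided by Proposition \ref{prop:modmulta} and then invoke Corollary \ref{cor:complexity_mod1} in the quotient $L[X,\sigma]/(X^r{-}1)$. Concretely, write $\varphi: L[X,\sigma]/(X^r{-}a) \xrightarrow{\sim} L[X,\sigma]/(X^r{-}1)$ for the map $A(X) \mapsto A(\lambda X)$, which on coefficients multiplies $a_i$ by $\mu_i := \lambda\,\sigma(\lambda)\cdots\sigma^{i-1}(\lambda)$ (with $\mu_0 = 1$ and $\mu_r = a$ by construction). The inverse $\varphi^{-1}$ is given by $X \mapsto \lambda^{-1}X$; it is well-defined as a ring morphism into the target since $(\lambda^{-1}X)^r = a^{-1}\cdot a = 1$, and it acts on coefficients by multiplication by $\mu_i^{-1}$.

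Given $A_1, A_2 \in L[X,\sigma]/(X^r{-}a)$, the algorithm has four steps. First, precompute the list $\mu_0,\ldots,\mu_{r-1}$ by a single pass of $r{-}1$ applications of $\sigma$ followed by a cumulative product in $L$; this costs $\tilde O(r^2)$ operations in $K$ under the assumptions on the working basis. Simultaneously precompute the inverses $\mu_i^{-1}$, again within $\tilde O(r^2)$. Second, form $\varphi(A_1)$ and $\varphi(A_2)$ by scaling coefficients, at cost $\tilde O(r^2)$. Third, compute the product $\varphi(A_1)\varphi(A_2)$ in $L[X,\sigma]/(X^r{-}1)$ via Corollary \ref{cor:complexity_mod1}, at cost $O(r^\omega)$. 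Fourth, apply $\varphi^{-1}$ by scaling coefficients with the $\mu_i^{-1}$, again in $\tilde O(r^2)$. Correctness is immediate from $\varphi$ being a ring isomorphism.

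Summing, the total cost is $O(r^\omega) + \tilde O(r^2) = O(r^\omega)$, using $\omega \geq 2$ to absorb the polylogarithmic overhead. There is no real obstacle: Proposition \ref{prop:modmulta} does all the algebraic work, and the only operational content is the cheap coefficient rescaling together with the precomputation of the $\mu_i$, which is dominated by the matrix multiplication inside Corollary \ref{cor:complexity_mod1}.
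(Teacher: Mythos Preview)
Your proof is correct and follows essentially the same approach as the paper: transport across the isomorphism of Proposition~\ref{prop:modmulta}, precompute the scalars $\mu_i$ (the paper writes $\lambda_i$) via the recurrence $\mu_{i+1}=\lambda\cdot\sigma(\mu_i)$ in $\tilde O(r^2)$, and invoke the $O(r^\omega)$ multiplication modulo $X^r{-}1$. The only cosmetic differences are that the paper cites Proposition~\ref{prop:evalbasis} directly rather than Corollary~\ref{cor:complexity_mod1}, and that you spell out the inverse map and the computation of the $\mu_i^{-1}$ explicitly, which the paper leaves implicit.
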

\begin{proof}
By Proposition \ref{prop:modmulta} and Proposition \ref{prop:evalbasis}, it is enough to show that for $A \in L[X,\sigma]/(X^r {-} a)$, $A(\lambda X)$ can be computed in $O(r^\omega)$ operations in $K$. 
For this we write $\lambda_i = \lambda \sigma(\lambda)\cdots \sigma^{i-1}(\lambda)$ 
and remark that the $\lambda_i$'s ($0 \leq i < r$) can be all computed 
within $\tilde O(r^2)$ operations in $K$ thanks to the recurrence
formula $\lambda_{i+1} = \lambda \cdot \sigma(\lambda_i)$.
Now evaluating the formula $A(\lambda X) =  \sum_{i} 
\lambda_i a_iX^i$ allows us to compute $A(\lambda X)$
in $\tilde O(r^2)$ operations in $K$.
\end{proof}
We could use the proof of Corollary \ref{cor:modmulta} directly to design an algorithm for multiplication modulo $X^r {-} a$. Such an algorithm would require computing $A_1(\lambda X)$ and $A_2(\lambda X)$ each time we use it to compute $A_1A_2$. Alternatively, we can slightly modify the basis on which we are evaluating the corresponding maps, which can provide a gain if there are many multiplications to do modulo $X^r {-} a$.

\medskip

Let $\lambda \in L^\times$, and let $\sigma_a = \lambda\sigma$. Let $\tilde b_{r-1} \in L$, and for $0 \leq i \leq r-2$, $\tilde b_i = \sigma_a^{r-1-i}(\tilde b_{r-1})$, such that $\tilde{\mathcal B} = (\tilde b_0, \dots, \tilde b_{r-1})$ is a basis of $L$ over $K$. By construction, we have for $1 \leq i \leq r-1$, $\sigma_a(\tilde b_i) = \tilde b_{i-1}$, and $\sigma_a(b_0) = a\tilde b_{r-1}$. For example, if $\mathcal B = (b_0, \ldots, b_{r-1})$ is a normal basis of $L$ over $K$, then $\tilde b_{r-1} = b_{r-1}$ and $\tilde b_i = \lambda \sigma(\lambda)\cdots \sigma^{i-1}(\lambda)b_i$ defines a suitable basis.
Now, let $\tilde B = \sum_{i=0}^{r-1} \tilde b_i T^i \in L[T]$.
\begin{prop}\label{prop:evalbasisa}
Let $A = \sum_{i=0}^{r-1}a_iX^i \in L[X,\sigma]$ and let $\tilde c_j = A(\sigma_a)(\tilde b_j)$. Let $\tilde A(T) = \sum a_i T^i \in L[T]$. Let $\tilde C_a = \sum_{j=0}^{r-1}\tilde c_jT^j$. Then 
$$\tilde C_a(T) = \tilde A(T)\tilde B(T) \pmod {T^r {-} a}.$$
\end{prop}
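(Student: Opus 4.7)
The plan is to mirror the proof of Proposition~\ref{prop:evalbasis}, with the only new wrinkle being the factor of $a$ that appears whenever an index wraps around cyclically.

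First, I would argue that both sides of the claimed identity are $L$-linear in the coefficients of $A$. On the right, this is clear since $\tilde A \mapsto \tilde A \cdot \tilde B \bmod (T^r{-}a)$ is $L$-linear. On the left, the map $A \mapsto \tilde C_a$ is also linear, because $\tilde c_j = A(\sigma_a)(\tilde b_j) = \sum_i a_i \sigma_a^i(\tilde b_j)$ depends $L$-linearly on the coefficients $a_i$. Therefore it suffices to check the identity for each monomial $A = X^i$ with $0 \le i \le r{-}1$.

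Next, I would unwind what $\sigma_a^i$ does to the twisted basis. By construction, $\sigma_a(\tilde b_\ell) = \tilde b_{\ell-1}$ for $1 \le \ell \le r{-}1$ and $\sigma_a(\tilde b_0) = a \tilde b_{r-1}$. Iterating this recurrence, one gets
$$\sigma_a^i(\tilde b_j) = \begin{cases} \tilde b_{j-i} & \text{if } j \ge i,\\ a\,\tilde b_{j-i+r} & \text{if } j < i.\end{cases}$$
Hence, for $A = X^i$, the sequence $(\tilde c_j)_{0 \le j < r}$ is explicit.

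Finally, I would compute the right-hand side for $\tilde A(T) = T^i$ directly:
$$T^i \tilde B(T) = \sum_{\ell=0}^{r-1} \tilde b_\ell\, T^{\ell+i}.$$
Reducing modulo $T^r {-} a$ replaces any $T^{\ell+i}$ with $\ell + i \ge r$ by $a\, T^{\ell+i-r}$. Collecting the coefficient of $T^j$, one sees it equals $\tilde b_{j-i}$ when $j \ge i$ (taking $\ell = j-i$) and $a\,\tilde b_{j-i+r}$ when $j < i$ (taking $\ell = j-i+r$). This matches $\tilde c_j$ term by term, completing the check on monomials.

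The only place one has to be careful is tracking the factor $a$ that appears both in $\sigma_a(\tilde b_0) = a\,\tilde b_{r-1}$ and in the substitution $T^r \equiv a \pmod{T^r{-}a}$; but these two effects are designed to match exactly, which is really the whole point of the construction of the basis $\tilde{\mathcal B}$ and the choice of modulus $T^r {-} a = T^r {-} N_{L/K}(\lambda)$. So there is no genuine obstacle, just bookkeeping.
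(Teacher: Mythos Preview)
Your proof is correct and follows essentially the same approach as the paper: reduce by $L$-linearity to monomials $A=X^i$, compute $\sigma_a^i(\tilde b_j)$ via the case split $j\ge i$ versus $j<i$, and match this against the coefficients of $T^i\tilde B(T)$ reduced modulo $T^r-a$. Your write-up is in fact slightly more explicit than the paper's about why the single factor of $a$ appears on both sides (no double wrap-around since $0\le i\le r-1$).
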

\begin{proof}
The proof is similar to that of Proposition \ref{prop:evalbasis}. By linearity, it is enough to check that the relation holds for $A = X^i$ for $0 \leq i \leq r-1$. Let $0 \leq i \leq r-1$. We have :
$$\sigma_a^i(b_j) = \left\{\begin{array}{lc} b_{j-i} & \text{if } j\geq i\\ ab_{r+j-i} & \text{if } i > j \end{array}\right. .$$
On the other hand, doing the calculations modulo $T^r {-}a$:
$$T^i B(T) = \sum_{j=0}^{r-1} b_j T^{i+j} = \sum_{j=i}^{r-1} b_{j-i}T^i + \sum_{j=0}^{i-1} ab_{r+j-i} T^{j}.$$
Hence, $T^i B(T) = C_{X^i}(T)$ for all $0 \leq i \leq r-1$, so $C_a(T) = \tilde{A}(T)B(T)$ for all $A \in L[X,\sigma]/(X^r {-} a)$.
\end{proof}

Algorithm \texttt{ModMult} below makes precise the algorithmical content 
of Proposition~\ref{prop:evalbasisa}; it uses a primitive $\matw$ that 
takes as input a tuple $(x_1, \ldots, x_r) \in L^r$ and outputs the $r 
\times r$ matrix whose $j$-th column are the coordinates of $x_j$ is
the working basis.

\begin{algorithm}[h]\label{algo:modmult}
\KwIn{$A_1, A_2 \in L[X,\sigma]$, $\lambda \in L^\times$}
\KwOut{$A = A_1A_2 \pmod {X^r {-}a}$ where $a = N_{L/K}(\lambda)$}
$a = N_{L/K}(\lambda)$\\
$\{b_0,\dots, b_{r-1}\} = \texttt{NormalBasis}(L/K)$\\
$\tilde b_{r-1} = b_{r-1}$\\
\For{$r{-}1 \geq i \geq 1$}{
$\tilde b_{i-1} = a \sigma(\tilde b_i)$}
$P = \matw(\tilde b_0, \dots, \tilde b_{r-1})$\\
$B = \sum_{i=0}^{r-1} \tilde b_i T^i$\\
\For{$1 \leq i \leq 2$}{
$C_i = A_iB \pmod {T^r {-} a}$, \text{write }$ C_i = \sum_{i=0}^{r-1} c_{i,j}T^j$\\
$N_i = \matw(c_{i,0}, \dots, c_{i,r-1})$
}
$N = N_1P N_2$\\
$C = (\beta_0~\dots~ \beta_{r-1}) N \begin{pmatrix} 1\\ T \\ \vdots \\ T^{r-1}\end{pmatrix}$\\
$A = CB^{-1} \pmod {T^r {-} a}$\\
\Return{$A(X)$}
\caption{\texttt{ModMult}}
\end{algorithm}

\begin{prop}
Algorithm \textrm{\tt ModMult} computes the product $A_1A_2$ in $L[X, \sigma]/(X^r{-}a)$ in $O(r^\omega)$ operations in $K$.
\end{prop}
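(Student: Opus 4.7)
The plan is to mirror the argument of Corollary \ref{cor:complexity_mod1}, substituting Proposition \ref{prop:evalbasisa} for Proposition \ref{prop:evalbasis} and the basis $\tilde{\mathcal B} = (\tilde b_0, \ldots, \tilde b_{r-1})$ for the normal basis. First I would set up the algebraic framework by checking that $A \mapsto A(\sigma_a)$ descends to a $K$-algebra isomorphism $\varepsilon_a : L[X,\sigma]/(X^r{-}a) \simeq \End_K(L)$: well-definedness reduces to $\sigma_a^r = N_{L/K}(\lambda) \cdot \mathrm{id}_L = a \cdot \mathrm{id}_L$; the homomorphism property follows from $\sigma_a \circ m_\alpha = m_{\sigma(\alpha)} \circ \sigma_a$ for every $\alpha \in L$; and bijectivity holds by dimension count combined with the linear independence of the $\sigma_a^i = \lambda_i \sigma^i$, itself an immediate consequence of Artin's lemma (since every $\lambda_i$ is nonzero).

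Next I would apply Proposition \ref{prop:evalbasisa} to each of $A_1, A_2$, which shows that the coefficients of $C_i = \tilde A_i \tilde B \pmod{T^r {-} a}$ are exactly $A_i(\sigma_a)(\tilde b_j)$. Hence $N_i$ is the matrix of $A_i(\sigma_a)$ with input coordinates expressed in $\tilde{\mathcal B}$ and output coordinates in the working basis. The matrix $P$ is a change-of-basis matrix between $\tilde{\mathcal B}$ and the working basis, so that $N_1 P N_2$ (with $P$ oriented to convert the working-basis output of $N_2$ back into $\tilde{\mathcal B}$-coordinates before feeding it to $N_1$) is the matrix of $(A_1 A_2)(\sigma_a)$ in those same bases. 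The last two lines of the algorithm then reverse Proposition \ref{prop:evalbasisa}: reading off the polynomial $C$ whose coefficients are the columns of $N$ and dividing by $\tilde B$ modulo $T^r {-} a$ recovers the coefficients of $A_1 A_2$. Invertibility of $\tilde B$ modulo $T^r {-} a$ is equivalent to $\tilde{\mathcal B}$ being a $K$-basis of $L$, which holds by construction.

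For the complexity bound, every preparatory or reconstruction step stays within $\tilde O(r^2)$ operations in $K$: the norm $a = N_{L/K}(\lambda)$, the iterative computation of the $\tilde b_i$ via $\tilde b_{i-1} = a\,\sigma(\tilde b_i)$, the two commutative polynomial products $A_i \tilde B \bmod (T^r {-} a)$, the reconstruction of $C$ from $N$, and the final inversion of $\tilde B$ modulo $T^r {-} a$ via a standard extended Euclidean computation. The only $O(r^\omega)$ contribution comes from the pair of matrix products forming $N_1 P N_2$, and since $\omega \geq 2$ this term dominates. The main delicate point is bookkeeping rather than a conceptual hurdle: one must be careful about the orientation of $P$ so that it fits with the interpretation of $N_1$ and $N_2$, and one must verify that $\tilde B(T)$ is genuinely a unit modulo $T^r {-} a$, the latter being the only place where the basis property of $\tilde{\mathcal B}$ is used.
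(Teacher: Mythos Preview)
Your proposal is correct and follows essentially the same route as the paper: correctness is inherited from Proposition~\ref{prop:evalbasisa} exactly as in the proof of Corollary~\ref{cor:complexity_mod1}, and the complexity is governed by the two $r\times r$ matrix products with all remaining steps in $\tilde O(r^2)$. The paper's own proof is much terser---it records only the complexity count and leaves correctness implicit in the preceding discussion---so your write-up simply makes explicit the isomorphism $\varepsilon_a$, the role of the change-of-basis matrix $P$, and the invertibility of $\tilde B$ modulo $T^r-a$, none of which departs from the paper's intended argument.
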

\begin{proof}
Multiplication of polynomials in $L$ modulo $T^r {-} a$ requires $\tilde O(r^2)$ operations in $K$. Multiplication of matrices of size $r$ in $K$ requires $O(r^\omega)$ operations in $K$. Hence the global complexity is $O(r^\omega)$ operations in $K$.
\end{proof}

\subsubsection{Multiplication modulo $Z(X^r)$} 
\label{subs:modmultZ}

Let $K'/K$ be a finite extension.
Define $L' = K' \otimes L$; it is an {\'e}tale $K'$-algebra endowed with the endomorphism $\sigma' = \text{id} \otimes \sigma$ that extends $\sigma$ and has order $r$. 

\begin{rmq}
The algebra $L'$ is not necessarily a field (for instance, when $K' = L$,
it splits as a product $L^r$). It is the reason why we needed to place
this paper in the more general setting of \'etale algebras.
\end{rmq}

Let $\lambda \in (L')^{\times}$. Set $a = N_{L'/K'}(\lambda) = \lambda\sigma(\lambda)\cdots\sigma^{r-1}(\lambda) \in K'$. We assume that
$K' = K(a)$. Let $Z \in K[T]$ be the minimal polynomial of $a$. We want to generalize the results of \S \ref{subs:fastmod} to multiplication modulo $Z(X^r)$ (in \S \ref{subs:fastmod}, we have $K' = K$, $L' = L$ and $\lambda \in L^\times$). Note that if $(b_0, \dots, b_{r-1})$ is a normal basis of $L/K$, then $(1 \otimes b_0, \dots, 1 \otimes b_{r-1})$ is a normal basis of $L'/K'$.
\begin{lem}\label{lem:extension}
The canonical morphism $1 \otimes \textrm{id}$ : $L[X,\sigma] \rightarrow L'[X,\sigma]$ induces an isomorphism
$$L[X,\sigma]/Z(X^r)\simeq  L'[X,\sigma'] /(X^r {-} a).$$
\end{lem}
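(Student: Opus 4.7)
The plan is to write down an explicit map, check it is well defined, and then verify that it sends a natural $K$-basis of the source bijectively onto a natural $K$-basis of the target.

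First I would observe that because $\sigma$ has order $r$, the element $X^r$ is central in $L[X,\sigma]$: indeed $X^r \alpha = \sigma^r(\alpha) X^r = \alpha X^r$ for all $\alpha \in L$. The same holds for $X^r$ in $L'[X,\sigma']$. Consequently, $Z(X^r)$ and $X^r - a$ generate two-sided ideals in the respective skew polynomial rings, so the quotients make sense as rings.

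Next, I would define $\phi : L[X,\sigma] \to L'[X,\sigma']/(X^r - a)$ as the ring homomorphism extending $1 \otimes \mathrm{id}: L \to L'$ and sending $X$ to $X$. Since $X^r \equiv a$ modulo $X^r - a$ and $Z$ is the minimal polynomial of $a$ over $K$, one has $\phi(Z(X^r)) = Z(a) = 0$, and because $Z(X^r)$ generates a two-sided ideal, $\phi$ factors through a well-defined homomorphism
$$\bar\phi : L[X,\sigma]/Z(X^r) \longrightarrow L'[X,\sigma']/(X^r - a).$$

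It then remains to prove that $\bar\phi$ is bijective. For this, set $e = \deg Z$ and fix a $K$-basis $(b_0, \ldots, b_{r-1})$ of $L$ (for example a normal basis). On the source, the centrality of $X^r$ together with the fact that $L[X,\sigma]$ is free of rank $r$ over $L[X^r]$ and that $L[X^r]/Z(X^r)$ is free of rank $e$ over $L$ gives the $K$-basis
$$\mathcal{B} = \bigl\{ (X^r)^k\, X^i\, b_j \ :\ 0 \le k < e,\ 0 \le i < r,\ 0 \le j < r \bigr\}.$$
On the target, $(1 \otimes b_0, \ldots, 1 \otimes b_{r-1})$ is a $K'$-basis of $L'$ and $(1, a, \ldots, a^{e-1})$ is a $K$-basis of $K' = K(a)$, so
$$\mathcal{B}' = \bigl\{ a^k\, X^i\, (1 \otimes b_j) \ :\ 0 \le k < e,\ 0 \le i < r,\ 0 \le j < r \bigr\}$$
is a $K$-basis of $L'[X,\sigma']/(X^r - a)$. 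By construction $\bar\phi$ sends $\mathcal B$ onto $\mathcal B'$ term by term (using $X^r \equiv a$ in the target), so $\bar\phi$ is a $K$-linear isomorphism.

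The only real subtlety, and what I would check most carefully, is the well-definedness step: one must use both that $X^r$ is central (so that $Z(X^r)$ and $X^r - a$ are genuine two-sided ideals, and quotients inherit a ring structure) and that $Z$ has coefficients in $K$, so that $Z(a)$ is computed inside $K' \subset L'$ and vanishes by the minimality of $Z$. Once this is in place, the dimension count and the explicit basis correspondence finish the proof with no further work.
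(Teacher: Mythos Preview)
Your proof is correct and follows essentially the same line as the paper's: set up the map, check that $Z(X^r)$ lies in the kernel because $Z(a)=0$, and then compare the two quotients as $K$-vector spaces of dimension $r^2\deg Z$. The only cosmetic difference is that the paper argues surjectivity first (the image contains $L$, contains $X^r\equiv a$ and hence all of $K'=K(a)$, so it contains $L'=K'\otimes_K L$) and then invokes equality of dimensions, whereas you write down explicit $K$-bases on both sides and match them term by term; these are two presentations of the same dimension count.
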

\begin{proof}
First note that $(X^r {-} a)$ is a two-sided ideal of $L'[X,\sigma] $, and that the canonical morphism $L[X,\sigma] \rightarrow L'[X,\sigma]$ induces a morphism $L[X,\sigma] \rightarrow L'[X,\sigma]/(X^r {-} a)$ which maps $X^r$ to $a$, hence the latter surjective. Moreover, by $K$-linearity, $Z(X^r)$ lies in the kernel of this map. 
We then get a surjective morphism of $K$-algebras $L[X,\sigma]/Z(X^r) \rightarrow L'[X,\sigma]/(X^r {-} a)$. Since both sides have dimension $r^2 \deg Z$ over $K$, this morphism is an isomorphism.
\end{proof}
We are now back exactly in the situation of Section \ref{subs:fastmod}, where $K$ has been replaced by $K'$ and $L$ by $L'$: all the computations can be carried out the same way, and passing back through the isomorphism of Lemma \ref{lem:extension}, we can perform fast multiplication modulo $Z(X^r)$. The algorithm is as follows:
\begin{algorithm}\label{algo:modmultZ}
\KwIn{$A_1, A_2 \in L[X,\sigma]$, $K'/K$ a finite extension, $\lambda \in L' = K' \otimes L$ nonzero, $a = N_{L'/K'}(\lambda) \in K'$ such that $K' = K(a)$, $Z \in K[T]$ the minimal polynomial of $a$ over $K$. }
\KwOut{$A = A_1A_2 \pmod {Z(X^r)}$ where $Z$ is the minimal polynomial of $a = N_{L'/K'}(\lambda)$ over $K$.}
Write $A_1 = \sum_{i=0}^{r-1} \alpha_i(X^r)X^i$, $A_2 = \sum_{i=0}^{r-1} \beta_i(X^r)X^i$\\
Let $\tilde A_1 = \sum_{i=0}^{r-1} \alpha_i(a)X^i$, $\tilde A_2 = \sum_{i=0}^{r-1} \beta_i(a)X^i$\\
Compute $\tilde A = \tilde A_1 \tilde A_2$ using \texttt{ModMult} in $L'[X, \sigma]/(X^r {-} a)$ endowed with the normal basis $(1 \otimes b_i)$\\
Write $A = \sum_{i=0}^r \gamma_i(a) X^i$\\
\Return{$A = \sum_{i=0}^r \gamma_i(X^r) X^i$}
\caption{\texttt{ModMultZ}}
\end{algorithm}

\begin{prop}
Algorithm \ref{algo:modmultZ} computes the product $A_1A_2$ in 
$L[X,\sigma]/(Z(X^r))$ with $O(r^{\omega}\deg Z)$ operations in $K$.
\end{prop}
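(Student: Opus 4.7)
The plan is to observe that Algorithm \texttt{ModMultZ} is nothing more than the composition of the isomorphism of Lemma~\ref{lem:extension} with Algorithm \texttt{ModMult} and its inverse, so the proof reduces to three items: (a) the first two and last two lines of the algorithm really implement the isomorphism $L[X,\sigma]/Z(X^r) \simeq L'[X,\sigma']/(X^r{-}a)$ and its inverse; (b) these conversion steps are cheap; (c) the dominant cost is the call to \texttt{ModMult}, which by the previous proposition costs $O(r^\omega)$ operations in $K'$ and each such operation costs $O(\deg Z)$ operations in $K$.

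For (a), I would first observe that since $Z(X^r)$ is central in $L[X,\sigma]$, every $A_i$ admits a unique decomposition $A_i = \sum_{j=0}^{r-1} \alpha_{i,j}(X^r) X^j$ with $\alpha_{i,j} \in L[Y]$ of degree $< \deg Z$. The map of Lemma~\ref{lem:extension} sends this representative to $\sum_j \alpha_{i,j}(a) X^j \in L'[X,\sigma']$ (writing $L' = K' \otimes_K L$, this is precisely applying $\mathrm{id}_L \otimes$(reduction mod $Z$) to each coefficient, then relabelling $X^j$). Consequently $\tilde A_i$ computed in line~2 is the image of $A_i$ under the isomorphism, and by multiplicativity $\tilde A_1 \tilde A_2 = \widetilde{A_1 A_2}$. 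The last two lines simply read off the image of $\tilde A = \tilde A_1 \tilde A_2$ in $L[X,\sigma]/Z(X^r)$ under the inverse isomorphism, which proves correctness.

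For (b), I would fix the presentation $K' = K[T]/(Z(T))$ with $a$ the class of $T$, so that $(1, a, \ldots, a^{\deg Z - 1})$ is the monomial basis of $K'/K$. With this choice, representing $\alpha(a) \in K'$ in this basis reuses verbatim the coefficient sequence of $\alpha \in K[Y]_{<\deg Z}$; applied coordinate-wise to elements of $L$, the substitutions $X^r \leftrightarrow a$ in lines 1--2 and 4--5 are pure relabellings and cost only $O(r^2 \deg Z)$ (which is just the input/output size). I would also note that, as remarked before Lemma~\ref{lem:extension}, $(1 \otimes b_i)_i$ is a normal basis of $L'/K'$, so \texttt{ModMult} applies directly over $K'$ with no precomputation issue.

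For (c), I would invoke the previous proposition: \texttt{ModMult} computes $\tilde A_1 \tilde A_2$ in $L'[X,\sigma']/(X^r{-}a)$ using $O(r^\omega)$ operations in $K'$. Each arithmetic operation in $K' = K[T]/(Z)$ reduces to $O(\deg Z)$ operations in $K$ via standard modular polynomial arithmetic, so the total cost of \texttt{ModMult} is $O(r^\omega \deg Z)$. This dominates the $O(r^2 \deg Z)$ cost of the substitutions, yielding the announced bound. The only mildly delicate point—the main obstacle, if any—is the bookkeeping around the basis: one must check that \texttt{ModMult} remains valid over the \'etale $K'$-algebra $L'$ (rather than a field), but this is exactly the reason the rest of the paper works in the \'etale setting, so no additional argument is required.
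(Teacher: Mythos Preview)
The paper does not actually supply a proof of this proposition; it is stated without argument and the text moves on to \S2.2. Your write-up is exactly the argument one expects: correctness via Lemma~\ref{lem:extension}, the conversions $X^r \leftrightarrow a$ being free relabellings once $K'$ is presented as $K[T]/(Z)$, and the dominant cost coming from the call to \texttt{ModMult} over $K'$.

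One small point deserves care. You write that ``each arithmetic operation in $K' = K[T]/(Z)$ reduces to $O(\deg Z)$ operations in $K$''. This is true for additions but not for multiplications, which cost $\tilde O(\deg Z)$ (or $O((\deg Z)^2)$ naively). A direct transplant of the previous proposition therefore only yields $\tilde O(r^\omega \deg Z)$. To obtain the genuine $O(r^\omega \deg Z)$ stated in the proposition, one should instead regard the matrix product inside \texttt{ModMult} as the product of two $r\times r$ matrices with polynomial entries in $K[T]_{<\deg Z}$ and perform it by evaluation/interpolation at $O(\deg Z)$ scalars, giving $O(\deg Z)\cdot O(r^\omega)$ scalar matrix products plus $\tilde O(r^2 \deg Z)$ for the pointwise (un)evaluation, which is absorbed since $\omega>2$. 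This refinement is presumably what the authors intend (they later use the bound in exactly this form), but since they give no proof, your version is already more detailed than the paper.
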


\subsection{Reconstruction with CRT}

Let $A_1, A_2 \in L[X,\sigma]$ be two skew polynomials. We recall that 
our aim is to design a fast algorithm for computing the product $P = A_1 
A_2$. We set $d = \deg P$.

\medskip

\noindent
{\bf Multiplication in large degree.}
We first assume that the polynomial $P = A_1 A_2$ has degree larger than 
$r$. In this case, the idea is to evaluate the $P$ modulo various 
$Z_i(X^r)$ using Algorithm \texttt{ModMultZ} and then to reconstruct the 
result using a non commutative version of the Chinese Remainder Theorem. 
The precise result we need is given by the following Proposition.

\begin{prop}
\label{prop:skew_crt}
Let $Z_1, \dots, Z_m \in K[T]$ be pairwise coprime polynomials, and let 
$Z = Z_1\cdots Z_m$. Then the natural map:
$$L[X,\sigma]/Z(X^r) \rightarrow 
L[X,\sigma]/Z_1(X^r) \times \cdots \times L[X,\sigma]/Z_m(X^r)$$
is an isomorphism of $K$-algebras.
\end{prop}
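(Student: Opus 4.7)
The plan is to reduce the statement to the classical Chinese Remainder Theorem by exploiting the fact that $X^r$ is central in $L[X,\sigma]$. Indeed, since $\sigma^r = \mathrm{id}$, we have $X^r a = \sigma^r(a) X^r = a X^r$ for every $a \in L$, so $K[X^r]$ is a central subring of $L[X,\sigma]$; in particular each $P(X^r)$ with $P \in K[T]$ is central, so the two-sided ideal $(P(X^r))$ coincides with both the left and right ideals it generates. This makes $L[X,\sigma]/Z_i(X^r)$ and $L[X,\sigma]/Z(X^r)$ well-defined $K$-algebras, and the natural projection a morphism of $K$-algebras (the inclusion $(Z(X^r)) \subseteq (Z_i(X^r))$ being obvious since $Z_i$ divides $Z$ in $K[T]$).

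I would then construct an explicit $K$-linear section via Bezout in $K[T]$. Setting $\hat Z_i = \prod_{j \ne i} Z_j$, the pairwise coprimality of the $Z_i$ gives $\gcd(Z_i, \hat Z_i) = 1$, so one can choose $U_i, V_i \in K[T]$ with $U_i Z_i + V_i \hat Z_i = 1$. Let $e_i = V_i \hat Z_i$; then $e_i \equiv \delta_{ij} \pmod{Z_j}$ in $K[T]$, and the elements $e_i(X^r)$ are central in $L[X,\sigma]$. The assignment
\[
(\bar A_1, \ldots, \bar A_m) \;\longmapsto\; \sum_{i=1}^m e_i(X^r)\, A_i \pmod{Z(X^r)}
\]
(for arbitrary lifts $A_i \in L[X,\sigma]$) is well-defined: replacing $A_i$ by $A_i + Z_i(X^r)\, C_i$ shifts the sum by $\sum_i e_i(X^r) Z_i(X^r) C_i = \sum_i V_i(X^r) Z(X^r) C_i$, which lies in $(Z(X^r))$ because $Z(X^r)$ is central. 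The congruences $e_i(X^r) \equiv \delta_{ij} \pmod{Z_j(X^r)}$ then show at once that this assignment is a section of the natural projection.

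To conclude, I would argue by dimension. Viewing $L[X,\sigma]$ as a free $K[X^r]$-module of rank $r^2$, with basis $\{b_i X^j : 0 \leq i, j \leq r-1\}$ for any $K$-basis $(b_i)$ of $L$ (the centrality of $X^r$ makes this straightforward to verify), one obtains that $L[X,\sigma]/Z(X^r)$ has $K$-dimension $r^2 \deg Z = \sum_i r^2 \deg Z_i$, which matches the $K$-dimension of the product on the right. Since the projection admits a section it is surjective, and a surjective $K$-linear map between finite-dimensional $K$-vector spaces of equal dimension is bijective.

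The only genuine subtlety is that the usual CRT is a statement about commutative rings; but all Bezout-style manipulations transfer to the non-commutative ring $L[X,\sigma]$ thanks to the centrality of $X^r$, which lets $e_i(X^r)$, $Z_i(X^r)$, and $Z(X^r)$ commute past arbitrary factors. This is the one and only place where the hypothesis $\sigma^r = \mathrm{id}$ intervenes in the proof.
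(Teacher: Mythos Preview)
Your proof is correct. Both your argument and the paper's hinge on the same key observation---that $X^r$ is central in $L[X,\sigma]$, so one may import the commutative CRT from $K[T]$---and both finish with the equal-dimension count.

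The presentations differ in a way worth noting. You lift the CRT idempotents $e_i \in K[T]$ to central elements $e_i(X^r)$ and build a global section $(\bar A_1,\ldots,\bar A_m)\mapsto\sum_i e_i(X^r)A_i$; this is the clean ring-theoretic route. The paper instead decomposes each $A_i$ as $\sum_{j=0}^{r-1} A_i^{(j)}(X^r)X^j$ and applies the commutative CRT in $L[T]$ separately to each of the $r$ sequences $(A_1^{(j)},\ldots,A_m^{(j)})$. The payoff of the paper's more explicit decomposition is algorithmic: it makes transparent that the skew CRT with central moduli of total degree $d$ reduces to $r$ independent commutative CRT problems of degree $d/r$ in $L[T]$, giving the $\tilde O(dr)$ cost used immediately afterwards in the reconstruction step of Algorithm~\texttt{Mult}. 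Your idempotent formulation yields the same complexity once one observes that multiplying by $e_i(X^r)$ acts coefficientwise on the $X^j$-components, but the paper's phrasing makes this explicit from the outset.
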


\begin{proof}
Since the domain and the codomain have the same dimension over $K$,
it is enough to prove the surjectivity.
For $i$ between $1$ and $m$, consider $A_i \in L[X,\sigma]/Z_i(X^r)$
and write it:
$$A_i = A_i^{(0)}(X^r) + A_i^{(1)}(X^r) X + \cdots + 
A_i^{(r{-}1)}(X^r) X^{r-1}$$
where the $A_i^{(j)}$'s are polynomials with coefficients in $L$.
For a fixed $j \in \{0, \ldots r{-}1\}$, let $A^{(j)} \in L[T]$ be
a polynomial such that the congruence $A^{(j)} \equiv A_i^{(j)} \pmod 
{Z_i}$ holds in the \emph{commutative} ring $L(T)$. We can therefore
write $A^{(j)} = A_i^{(j)} + Z_i Q_i^{(j)}$
for some polynomials $Q_i^{(j)} \in L[T]$. Noting that the inclusion 
$L[T] \to L[X,\sigma]$, $T \mapsto X^r$ is a ring homomorphism 
(\emph{i.e.} the multiplication on $L[T]$ agrees with that on 
$L[X,\sigma]$), we deduce that the equality
$$A^{(j)}(X^r) = A_i^{(j)}(X^r) + Z_i(X^r) \cdot Q_i^{(j)}(X^r)$$
holds in $L[X,\sigma]$. Multiplying it by $X^j$ on the right and
summing up over $j$, we end up with $A \equiv A_i \pmod{Z_i}$ for
all $i$. Surjectivity is proved.
\end{proof}

\begin{rmq}
\label{rmq:skew_crt}
The above proof is constructive. More precisely it shows that solving 
the Chinese Remainder problem of degree $d$ in $L[X,\sigma]$ with 
\emph{central moduli} reduces to solving $r$ independant Chinese 
Remainder problems of degree $\frac d r$ in the \emph{commutative} 
ring $L[X^r]$ and therefore can be achieved for a cost of $\tilde 
O(d)$ operations in $L$, corresponding to $\tilde O(dr)$ operations
in $K$ (see~\cite{GaGe03}, \S 10.3).
\end{rmq}

It remains now to explain how the moduli $Z_i(X^r)$'s can be 
constructed. We will do it in two different concrete contexts: first, 
the case of finite fields and second, the case of number fields.

\medskip

\noindent
{\it The case of finite fields.}
We assume that $K$ and $L$ are finite fields and write $q$ for the
cardinality of $K$. We consider an auxiliary finite extension $K'$ of
$K$ of degree $n$ and build the compositum $L' = K' \otimes_K L$. We
endow $L'$ with the uniform measure. We assume that $n$ is chosen
sufficiently large so that:
\begin{equation}
\label{eq:largen}
q^n \geq \max(64n, 8r).
\end{equation}
Asymptotically the latest condition is fulfiled as soon as $n$ grows
at least as fast as $\log r$.

\begin{lem}
\label{lem:proba}
Let $t$ be an integer such that $4 t^2 \leq n q^n$.
Let $\lambda'_1, \ldots, \lambda'_t$ be random independant elements of
$L'$. 
Then the $N_{L'/K'}(\lambda'_i)$'s all generate $K'$ over $K$
and are pairwise non-conjugate over $K$ with probability at least
$\frac 1 2$.
\end{lem}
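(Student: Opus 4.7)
The plan is to handle the failure event as a union of two simpler bad events---(i) some $N_{L'/K'}(\lambda'_i)$ lies in a proper subfield of $K'$; (ii) some pair $N_{L'/K'}(\lambda'_i)$, $N_{L'/K'}(\lambda'_j)$ is conjugate over $K$---and to bound each one by a union bound after pinning down the exact distribution of $N_{L'/K'}(\lambda')$ when $\lambda'$ is uniform on $L'$.

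The preliminary distributional analysis is the conceptual core. Since $K$ and $L$ are finite with $|K|=q$, $[L:K]=r$ and $[K':K]=n$, the \'etale algebra $L' = K' \otimes_K L$ is isomorphic to the product $\F_{q^\ell}^e$ with $\ell = \mathrm{lcm}(n,r)$ and $e = \gcd(n,r)$, and $N_{L'/K'}$ factors as the componentwise product of the field norms $\F_{q^\ell} \to \F_{q^n}$. Each factor norm is surjective onto $\F_{q^n}^\times$ with fibers of constant size, so translation-invariance of the uniform distribution on $(K')^\times$ yields that $N_{L'/K'}(\lambda')$ vanishes with probability $p_0 = 1 - (1-q^{-\ell})^e \leq n q^{-n}$ and is exactly uniform on $(K')^\times$ conditionally on being nonzero. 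Once this is in hand the rest is elementary counting.

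For event (i), I would enumerate the maximal proper subfields of $K'$: these are the $\F_{q^{n/p}}$ as $p$ ranges over the prime divisors of $n$. Using the near-uniformity above, the probability for a single index is bounded by $p_0 + \sum_{p \mid n}(q^{n/p}-1)/(q^n-1)$, which is dominated by a small multiple of $q^{-n/2}$ thanks to the hypothesis $q^n \geq 64n$. Union-bounding over the $t$ indices keeps this contribution safely below $1/4$.

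For event (ii), fix a pair $i \neq j$ and condition on $\lambda'_j$: the $\gal(K'/K)$-orbit of $N_{L'/K'}(\lambda'_j)$ has at most $n$ elements, and by independence the variable $N_{L'/K'}(\lambda'_i)$ hits any one of them with probability at most $1/(q^n-1)$. Hence each pair is conjugate with probability of order $n/q^n$, and the union bound over the $\binom{t}{2}$ pairs is precisely where the hypothesis $4t^2 \leq n q^n$ is consumed. Combining this with (i) and using the slack in $q^n \geq \max(64n, 8r)$ bounds the total failure probability by $1/2$. The main obstacle is the distributional step: one needs to spell out the tensor-product decomposition $L' \cong \F_{q^\ell}^e$ and carefully track the norm through it, after which the rest collapses to a standard birthday-style union bound.
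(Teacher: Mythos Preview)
Your overall strategy is the same as the paper's: decompose $L'$ as a product of copies of a field (the paper writes $L'=(M')^g$ with $[M':K']=f$; your $\F_{q^\ell}^e$ is the same thing with $\ell=nf$, $e=g$), deduce that $N_{L'/K'}$ is uniform on $(K')^\times$ conditionally on being nonzero, count the non-generators of $K'$, and finish with a birthday-type estimate for the conjugacy collisions. The paper bounds the success probability from below by a product, you bound the failure probability from above by a union; that is only a cosmetic difference.

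Where your sketch breaks is the arithmetic in the last step. For a fixed pair $i\neq j$ your own computation gives $\Pr[\text{conjugate}]\le n/(q^n-1)$, so the union bound over $\binom{t}{2}$ pairs is at most $\binom{t}{2}\,n/(q^n-1)$. The hypothesis $4t^2\le nq^n$ only yields $t^2 n/q^n\le n^2/4$, which is not small as soon as $n\ge 2$; so this is \emph{not} ``precisely where the hypothesis $4t^2\le nq^n$ is consumed.'' The same mismatch hits event~(i): with $t$ as large as $\tfrac12\sqrt{nq^n}$, the term $t\cdot O(\sqrt n)\,q^{-n/2}$ is of order $n$, not $o(1)$. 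In short, the factor $n$ sits on the wrong side of the inequality for the argument to close. (The paper's own derivation has the matching slip: its birthday factors are written as $1-j/(nc_n)$ rather than $1-jn/c_n$, and the displayed inequality leading to the term $\frac{t(t-1)}{2nq^n}$ goes the wrong way.) With a hypothesis of the form $C\,n\,t^2\le q^n$ in place of $4t^2\le nq^n$, your outline---and the paper's---would go through; as stated, neither does.
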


\begin{proof}
The \'etale algebra $L'$ splits as a product $(M')^g$ where $M'$ is a 
finite extension of $K'$ of degree $f$ and $g$ is a positive integer.
Moreover if $x \in L'$ decomposes as $x = (x_1, \ldots, x_g)$, we
have:
$$N_{L'/K'}(x) = N_{M'/K'}(x_1) \cdots N_{M'/K'}(x_g).$$
Observe that the norm map $N_{M'/K'}$ takes the value $0$ only at $0$. 
Hence the probability that $N_{M'/K'}$ vanishes is $q^{-nf}$. Therefore 
$N_{L'/K'}$ vanishes with probability $1 - (1 - q^{-nf})^g$.
As for the nonzero values of $K'$, they are reached by $N_{L'/K'}$ with 
uniform probability because $N_{L'/K'}$ is a surjective group 
homomorphism, \emph{i.e.}
$$\text{Prob}
\big[N_{L'/K'} = a \big] = \left(1 - \frac 1{q^{nf}}\right)^g 
\cdot \frac 1 {q^n - 1}$$
for all $a \in K'$, $a \neq 0$. Let $c_n$ be the number of elements of
$K'$ that generate $K'$ over $K$. 
The probability that a fixed $\lambda'_i$ satisfy the requirement 
$K\big(N_{L'/K'}(\lambda'_i)\big) = K'$ is then $(1 - q^{-nf})^g \cdot 
\frac {c_n} {q^n - 1}$. Assuming that this occurs, the probability that 
the $N_{L'/K'}(\lambda'_i)$'s are pairwise non-conjugate is:
$$\left(1 - \frac 1{nc_n}\right) \cdot
\left(1 - \frac 2{nc_n}\right) \cdots
\left(1 - \frac {t-1}{nc_n}\right).$$
Putting all together, we find the probability of success:
$$\left(1 - \frac 1{q^{nf}}\right)^g \cdot \frac {c_n}{q^n-1} \cdot
\left(1 - \frac 1{c_n}\right) \cdots
\left(1 - \frac {t-1}{nc_n}\right)$$
which is at least:
\begin{equation}
\label{eq:probasuccess}
\frac{c_n}{q^n{-}1} \left(1 - \frac g{q^{nf}} - 
\frac{t(t{-}1)}{2\:c_n}\right)
\geq \frac{c_n}{q^n} - \frac r{q^n} - \frac{t(t{-}1)}{2n\:q^n}.
\end{equation}
Clearly $q^n{-}c_n$ is the cardinality of the union of all strict 
subextensions of $K'$. Therefore:
$$q^n - c_n \leq \sum_{m | n, m < n} q^m \leq 2 \sqrt n \cdot q^{n/2}$$
the latter inequality coming from the fact that $n$ has at most 
$2 \sqrt n$ divisors. From \eqref{eq:largen}, we derive $q^n{-}c_n \leq 
\frac {q^n} 4$. On the other hand, it follows from our assumptions that
$r \leq \frac {q^n} 8$ and $\frac{t(t{-}1)}{2n} \leq \frac{t^2}{2n} \leq
\frac {q^n} 8$. Combining with \eqref{eq:probasuccess}, we find that the
probability of success is at least $\frac 1 2$.
\end{proof}

\begin{algorithm}[h]\label{algo:globalmult}
\KwIn{$A_1, A_2 \in L[X,\sigma]$ of degree $\leq d$}
\KwOut{$P = A_1A_2$}
Choose $n$ and $K'$ such that Eq.~\eqref{eq:largen} holds and \newline
\phantom{}\hfill
$\displaystyle
\frac{8d}{nr} \cdot \left(\frac{2d}{nr} + 1\right) \leq n q^n$\hfill\null
\\
Set $t = \lceil\frac{2d}{nr}\rceil$\\
Pick $\lambda'_1, \ldots, \lambda'_t \in L' = K' \otimes_K L$ at random\\
\For{$1 \leq i \leq t$}
{Compute the min. poly. $Z_i \in K[T]$ of $N_{L'/K'}(\lambda'_i)$\\
Compute $P_i = A_1A_2 \in L[X,\sigma]/Z_i(X^r)$ 
\newline\phantom{}\hfill /\!\!/ \textit{use Algorithm \textrm{\tt ModMultZ}}}{}
Compute $P$ such that $\deg A \leq 2d$ and 
$P \equiv P_i \pmod{Z_i}$\label{line:crt}
\newline\phantom{}\hfill /\!\!/ \textit{use Proposition \ref{prop:skew_crt}}\\
\Return{$P$}
\caption{\texttt{Mult}}
\end{algorithm}

\begin{thm}
Let $A_1, A_2 \in L[X,\sigma]$ of degree $d \geq r$. 
Then Algorithm \textrm{\tt Mult} computes the product $A_1A_2$ 
within $O(dr^{\omega-1})$ operations in $K$ with probability of 
success at least $\frac 1 2$.
\end{thm}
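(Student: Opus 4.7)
The plan is to decompose the argument into three parts: correctness, probability of success, and running time, mirroring the three ways the algorithm could fail.

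First I would handle correctness conditional on a successful random draw. When the conclusion of Lemma~\ref{lem:proba} holds, each $Z_i$ is irreducible of degree exactly $n$ (because $N_{L'/K'}(\lambda'_i)$ generates $K'/K$) and the $Z_i$'s are pairwise distinct (by the non-conjugacy clause), hence pairwise coprime in $K[T]$. Proposition~\ref{prop:skew_crt} then reconstructs a unique element of $L[X,\sigma]/Z(X^r)$, where $Z = \prod_i Z_i$ has degree $tn$, and the choice $t = \lceil 2d/(nr)\rceil$ forces $rtn \geq 2d \geq \deg(A_1 A_2)$, so the reconstructed element is literally $A_1A_2$.

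Next, for the probability analysis, Lemma~\ref{lem:proba} yields success with probability $\geq \tfrac{1}{2}$ whenever $4t^2 \leq nq^n$. Setting $x = 2d/(nr)$ so that $t \leq x+1$, I would bound $4t^2 \leq 4(x+1)^2$ and check that the algorithm's inequality $8x(x+1) \leq nq^n$ implies this in the regime $x \geq 1$; the edge case $x < 1$ (so $t = 1$) reduces to requiring $N_{L'/K'}(\lambda'_1)$ to generate $K'$, which occurs with probability at least $\tfrac{1}{2}$ directly from Eq.~\eqref{eq:largen}, following the same estimates that drive the proof of Lemma~\ref{lem:proba}.

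For the running time, each of the $t$ calls to \texttt{ModMultZ} costs $O(nr^\omega)$ operations in $K$ by the complexity bound for Algorithm~\ref{algo:modmultZ}, contributing $O(tnr^\omega)$ in total. Since $tn \leq 2d/r + n$, this is $O(dr^{\omega-1} + nr^\omega)$, and choosing $n$ of size $O(\log r)$ (permitted by~\eqref{eq:largen}) together with the hypothesis $d \geq r$ absorbs the residual term into $O(dr^{\omega-1})$. The ancillary costs, namely generating each $\lambda'_i$ and its norm in $\tilde O(nr^2)$, each minimal polynomial in $\tilde O(n^2)$, and the final CRT reconstruction in $\tilde O(dr)$ via Remark~\ref{rmq:skew_crt}, are all dominated under the same hypotheses. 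The main obstacle I expect is not any individual step but the delicate balance on the auxiliary parameter $n$: it must be large enough to satisfy both inequalities of the first step of Algorithm~\ref{algo:globalmult}, yet small enough that $tn$ stays close to $2d/r$ so that the \texttt{ModMultZ} contribution does not blow up. The assumption $d \geq r$ is precisely what makes this balance simultaneously achievable.
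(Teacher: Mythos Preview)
Your approach matches the paper's: correctness via Proposition~\ref{prop:skew_crt} once the $Z_i$'s are pairwise coprime, probability via Lemma~\ref{lem:proba}, and complexity by summing the $t$ calls to \texttt{ModMultZ} plus the CRT reconstruction from Remark~\ref{rmq:skew_crt}. The structure and the key estimates are the same.

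There is one technical slip. You write that $n$ can be taken of size $O(\log r)$, citing only Eq.~\eqref{eq:largen}. But the algorithm imposes a \emph{second} constraint on $n$, namely $\frac{8d}{nr}\bigl(\frac{2d}{nr}+1\bigr) \leq nq^n$, and this forces $n$ to grow with $\log d$ as well: one needs $n = O(\log d + \log r)$, which is what the paper uses. Fortunately this does not damage your complexity argument: under the hypothesis $d \geq r$ one has $\log r \leq \log d$, so $n = O(\log d)$, and then $nr^\omega = O((\log d)\, r \cdot r^{\omega-1}) \leq O((\log d)\, d \cdot r^{\omega-1}) = \tilde O(dr^{\omega-1})$, exactly as you need. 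Just replace ``$O(\log r)$ (permitted by~\eqref{eq:largen})'' by ``$O(\log d + \log r)$ (so that both constraints of line~1 of Algorithm~\ref{algo:globalmult} are met)'' and the rest of your argument goes through unchanged.
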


\begin{proof}
Observe first that $n$ can be chosen such that $n = O(\log d + \log r)$.
Computing the product in $L[X,\sigma]/Z_i(X^r)$ requires $O(r^\omega n)
= \tilde O (r^\omega)$ 
operations in $K$. Moreover by Remark~\ref{rmq:skew_crt}, the 
reconstruction (line \ref{line:crt}) can be done for a cost of $\tilde O(rd)$ 
operations in $K$. The overall cost of \texttt{Mult} is 
then $\tilde O(dr^{\omega-1})$ as announced.
The fact that the probability of success is at least $\frac 1 2$
follows from Lemma \ref{lem:proba}.
\end{proof}

\noindent
{\it The case of number fields.}
We assume that $K$ and $L$ are number fields. It is then known that the 
image of the norm map $N_{L/K} : L^\star \to K^\star$ has index $r$ in 
$K^\star$. More precisely, class field theory teaches us that $K^\star / 
N_{L/K}(L^\star)$ is canonically isomorphic to the Galois group of the
abelian extension $L/K$, \emph{i.e.} to $\Z/r\Z$. In particular, the
image of $N_{L/K}$ is infinite meaning that if we take a finite set of 
random elements $\lambda \in L$, it is likely that the norm of the 
$\lambda$'s will be pairwise distinct.
We can then reapply the strategy used in the case of finite field 
without having to work with an auxiliary extension $K'$. We end up this 
way with a probabilistic Las Vegas algorithm whose complexity is $\tilde 
O(d r^{\omega{-}1})$ operations in $K$ and whose probability of success 
is high.

\medskip

\noindent
{\bf Multiplication in small degree.}
The idea for fast multiplication in small degree is that if a skew 
polynomial has degree $d \ll r$, it is determined by its values on 
$d{+}1$ linearly independent elements of $L$. Hence, starting with two 
skew polynomials $A_1,A_2$ whose degrees add up to $d$, we should be 
able to compute their product by composing of two $K$-linear maps over 
vector spaces of dimension $d{+}1$.
However, we know some efficient algorithm for evaluating $A(\sigma)$
only on a subspace of $L$ which is spanned by the first vectors of a
normal basis. For this reason, it order to compute $A_1A_2(b_0), \ldots, 
A_1A_2(b_{d-1})$, we shall need to know the whole of the linear map 
$A_1(\sigma)$ (because $A_2(b_0), \ldots, A_2(b_{d-1})$ are in general
nothing to do with a truncated normal basis).

\begin{algorithm}[!h]\label{algo:smalldegmult}
\KwIn{$A_1, A_2 \in L[X,\sigma]$, $\deg A_1 + \deg A_2 < r$}
\KwOut{$P = A_1A_2$}
Set $d = \deg A_1 + \deg A_2$\\
Compute $A_2(b_0), \ldots, A_2(b_d)$
\hfill /\!\!/ \textit{use Corollary~\ref{cor:eval_small}}\\
Compute the matrix of $P(\sigma)$
\hfill /\!\!/ \textit{use Proposition~\ref{prop:evalbasis}}\\
Compute $c_0 = A_1A_2(b_0), \ldots, c_d =A_1A_2(b_d)$
\newline\phantom{}\hfill /\!\!/ \textit{matrix multiplication of sizes
$r \times r$ by $r \times (d{+}1)$}\\
Compute $P \in L[X,\sigma]$ s.t. $P(b_i) = c_i$ and $\deg P \leq d$.
\newline\phantom{}\hfill /\!\!/ \textit{use Algorithm \textrm{\tt 
SmallDegreeInterpolation}}\\
\Return{P}
\caption{\texttt{SmallDegreeMultiplication}}
\end{algorithm}

The complexity of the above algorithm is given by the next Theorem whose 
proof is straightforward after what we have already done (the bottleneck 
comes from the matrix multiplication step).

\begin{thm}
Let $A_1, A_2$ such that $\deg A_1 + \deg A_2 \leq d<r$. Then Algorithm 
\ref{algo:smalldegmult} computes the product $A_1A_2$ with 
$O(d^{\omega-2}r^2)$ operations in $K$.
\end{thm}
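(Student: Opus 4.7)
The plan is to verify the complexity step by step for Algorithm~\ref{algo:smalldegmult}, confirming the author's remark that the bottleneck is the rectangular matrix multiplication in step~3.

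First, I would handle the two evaluation/interpolation steps at the endpoints. For step~1, evaluating $A_2$ at $b_0, \ldots, b_d$ falls directly under Corollary~\ref{cor:eval_small}, giving $\tilde O(rd)$ operations in $K$. For step~5, reconstructing $P$ of degree $\leq d$ from the values $c_0, \ldots, c_d$ is precisely the problem solved by \texttt{SmallDegreeInterpolation}, with complexity $\tilde O(rd)$ by Theorem~\ref{thm:interp_small}. Both bounds are absorbed into $O(d^{\omega-2} r^2)$ since $\omega \geq 2$ and $d \leq r$.

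Next, for step~2, I would argue that the matrix of $A_1(\sigma) \in \End_K(L)$ in the normal basis on the domain and the working basis on the codomain can be obtained from the values $A_1(b_0), \ldots, A_1(b_{r-1})$ (the $j$-th column being the coordinates of $A_1(b_j)$). By Proposition~\ref{prop:evalbasis}, these values are the coefficients of $\tilde A_1(T) B(T) \bmod{T^r-1}$, a single product of polynomials in $L[T]$, which costs $\tilde O(r^2)$ operations in $K$. Again this is swallowed by $O(d^{\omega-2} r^2)$.

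The main (and only) genuinely non-trivial step is step~3: multiplying the $r \times r$ matrix $M_1$ of $A_1(\sigma)$ by the $r \times (d{+}1)$ matrix $M_2$ whose columns encode $A_2(b_0), \ldots, A_2(b_d)$ in the working basis. Here I would invoke the standard rectangular-by-square reduction: partition $M_1$ into $(r/d)^2$ blocks of size $d \times d$ and $M_2$ into $(r/d)$ blocks of size $d \times (d{+}1)$. Then $M_1 M_2$ is computed via $(r/d)^2$ multiplications of $d \times d$ matrices, at a total cost of $(r/d)^2 \cdot O(d^\omega) = O(d^{\omega - 2} r^2)$ operations in $K$. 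Combining all steps and using that $P(\sigma) = A_1(\sigma) \circ A_2(\sigma)$ is the $K$-algebra composition (by Lemma~\ref{lem:iso1}) so that $P(b_i) = A_1(A_2(b_i)) = c_i$, the algorithm is correct and has the claimed complexity.
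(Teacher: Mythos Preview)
Your proposal is correct and matches the paper's approach exactly: the paper's own proof is just the remark that the result is ``straightforward'' with the rectangular matrix multiplication as the bottleneck, and you have supplied precisely those details. One minor point worth tightening: since your $M_1$ has domain in the normal basis and codomain in the working basis while the columns of $M_2$ are expressed in the working basis, an intermediate application of the change-of-basis matrix $\Omega$ is needed (as in the proof of Corollary~\ref{cor:complexity_mod1}); this is another $r\times r$ by $r\times(d{+}1)$ product and so does not affect the $O(d^{\omega-2}r^2)$ bound.
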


\noindent
{\bf Conclusion.}
As a conclusion, several algorithms with different complexities are 
available for the multiplication of skew polynomials. Precisely, we have 
designed in this paper one algorithm of complexity $\tilde O(d 
r^{\omega-1})$ when $d \geq r$ and an another algorithm of complexity 
$\tilde O(d^{\omega-2} r^2)$ when $d \leq r$.
Apart from that, Wachter-Zeh's algorithm \cite{PuWa16} performs the same
computation with complexity $\tilde O(d^{(\omega+1)/2} r)$ without any
assumption on $d$. The corresponding complexity curves are represented
on Figure~\ref{fig:complexity}.
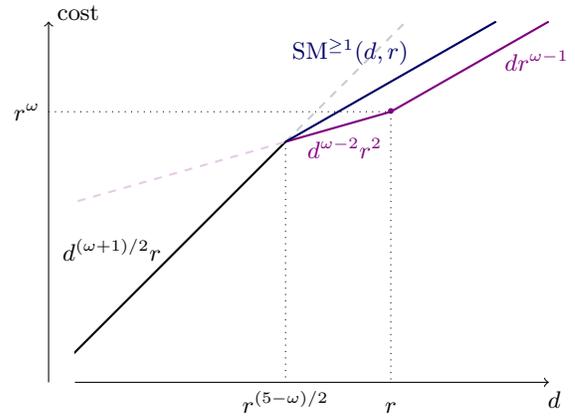
\begin{figure}
\hfill
\begin{tikzpicture}[xscale=7,yscale=4]
\begin{scope}
\clip (0.25,1.4) rectangle (1.37,2.9);
\draw[->] (0.4,1.6)--(1.3,1.6);
\draw[->] (0.35,1.6)--(0.35,2.8);
\node[below] at (1.31,1.6) { $d$ };
\node[right] at (0.35,2.83) { cost };
\end{scope}
\begin{scope}[thick]
\clip (0.4,1.5) rectangle (1.3,2.8);
\draw (0,1)--(0.8,2.4);
\draw[black!20, dashed] (0.8,2.4)--(2,4.5);
\draw[violet!20, dashed] (0,2)--(0.8,2.4);
\draw[violet] (0.8,2.4)--(1,2.5)--(1.3,2.8);
\node[violet,scale=0.6] at (1,2.5) { $\bullet$ };
\draw[bleufonce] (0.8,2.4)--(1.2,2.8);
\end{scope}
\node[left] at (0.58,2.04) { $d^{(\omega+1)/2} r$ };
\draw[dotted] (0.35,2.5)--(1,2.5)--(1,1.6);
\draw[dotted] (0.8,2.4)--(0.8,1.6);
\node[below] at (1,1.6) { $\vphantom{r^{(5-\omega)/2}}r$ };
\node[below] at (0.8,1.6) { $r^{(5-\omega)/2}$ };
\node[left] at (0.35,2.5) { $r^\omega$ };
\node[violet,left] at (1,2.38) { $d^{\omega-2} r^2$ };
\node[violet,right] at (1.2,2.67) { $d r^{\omega-1}$ };
\node[bleufonce,left] at (1.05,2.7) { $\SM^{\geq 1}(d,r)$ };
\end{tikzpicture}
\hfill\null

\vspace{-5mm}

\caption{Complexity profiles (log-log scale)}
\label{fig:complexity}
\end{figure}
Putting all together, we find that the product in $L[X,\sigma]$ can be 
performed within $\tilde O(\SM(d,r))$ operations in $K$ where: 
$$\begin{array}{r@{\hspace{0.5ex}}ll}
\SM(d,r) & = d^{(\omega+1)/2} r & \text{for } d \leq r^{(5-\omega)/2} 
\smallskip \\
& = d^{\omega-2} r^2 & \text{for } r^{(5-\omega)/2} \leq d \leq r 
\smallskip \\
& = d r^{\omega-1} & \text{for } d \geq r.
\end{array}$$
As already discussed in the introduction, we expect to lower the
complexity to $\tilde O(d^{\omega-1} r)$ in the range $d \leq r$ and, 
until now, we have not succeeded in doing so.

\section{Other operations\\and applications}

Classically, fast multiplication algorithms can be used to speed up many 
other computations. This general philosophy works for skew polynomials
as well and was concretized in \cite{CaLe17}, \S 3.2.
Below, we analyze briefly the impact of the algorithms designed above in 
this paper.

In order to state our complexity results more elegantly, we introduce 
the function $\SM^{\geq 1}$ defined by:
$$\SM^{\geq 1}(d,r) = \sup_{d' \leq d} \left(
\SM(d',r) \cdot \frac d{d'}\right).$$
A direct computation shows that:
$$\begin{array}{r@{\hspace{0.5ex}}ll}
\SM^{\geq 1}(d,r) & = 
d^{(\omega+1)/2} r & \text{for } d \leq r^{(5-\omega)/2} 
\smallskip \\
& = d r^{4/(5-\omega)} & \text{for } d \geq r^{(5-\omega)/2}.
\end{array}$$
The function $\SM^{\geq 1}$ (viewed as a function of the variable $d$) 
is the smallest function above $\SM$ whose ``log-log slope'' is always 
at least $1$ (see Figure~\ref{fig:complexity}). The notation comes from 
this interpretation.

With $\omega = 2.37$, we have 
$\SM^{\geq 1}(d,r) \approx d^{1.69}r$ for $d \leq r^{0.76}$
$\SM^{\geq 1}(d,r) \approx dr^{1.52}$ for larger $d$.

\medskip

\noindent
{\bf Euclidean division.}
An algorithm that performs (right) Euclidean divisions in $L[X,\sigma]$ 
and takes advantage of fast multiplication algorithm is depicted in 
\cite{CaLe17}, \S 3.2.1 (Algorithm \texttt{REuclideanDivision}). Proposition 
3.2.3 of \emph{loc. cit.} extends readily to the settings of this paper 
and shows that the aforementioned algorithm has a complexity cost of 
$\tilde O(\SM^{\geq 1}(d,r))$ operations in $K$.

\medskip

\noindent
{\bf {\sc gcd} and {\sc lcm} computation.}
The classical half-gcd algorithm that we already mentioned above (see 
\S \ref{subs:interp_small} and \cite{GaGe03}, \S 11) works in 
the same way to compute left and right {\sc gcd}'s of skew polynomials.
The precision corresponding algorithm is written in \cite{CaLe17},
\S 3.2.2 (Algorithm \texttt{FastExtendedRGCD}).

\begin{prop}
The algorithm \texttt{FastExtendedRGCD} of \cite{CaLe17}, \S 3.2.2 (using
fast multiplication algorithms described above in this paper as
primitives)
runs in $\tilde O(\SM^{\geq 1}(d,r))$ operations in $K$.
\end{prop}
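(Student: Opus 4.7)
The plan is to combine the half-gcd recurrence pattern with the fast arithmetic primitives of this paper, and to solve the resulting recurrence using the very definition of $\SM^{\geq 1}$.

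First, I would recall that \texttt{FastExtendedRGCD} from \cite{CaLe17} follows the usual half-gcd scheme: on inputs of degree at most $d$, it makes two recursive calls on subproblems of degree $\leq d/2$, together with a bounded number of skew polynomial multiplications, additions, and right Euclidean divisions in degree $O(d)$. By the algorithms designed in Section~\ref{sec:fastmult}, each multiplication of degree $\leq d$ can be performed in $\tilde O(\SM(d,r))$ operations in $K$, and, by the previous paragraph, each right Euclidean division in degree $\leq d$ costs $\tilde O(\SM^{\geq 1}(d,r))$. The cost $T(d)$ of the algorithm therefore satisfies a recurrence of the shape
$$T(d) \;\leq\; 2\,T(d/2) + \tilde O\!\bigl(\SM^{\geq 1}(d,r)\bigr).$$

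Second, I would unfold this recurrence to obtain
$$T(d) \;=\; \tilde O\!\Bigl(\,\sum_{i=0}^{\lceil \log_2 d \rceil} 2^i\,\SM^{\geq 1}\!\bigl(d/2^i,\,r\bigr)\Bigr).$$
By construction, $d \mapsto \SM^{\geq 1}(d,r)$ has log-log slope at least $1$, so the map $d \mapsto \SM^{\geq 1}(d,r)/d$ is non-decreasing. Consequently, writing $d' = d/2^i$, each term rewrites as $2^i\,\SM^{\geq 1}(d/2^i,r) = \frac{d}{d'}\cdot \SM^{\geq 1}(d',r)$ and is thus bounded by $\SM^{\geq 1}(d,r)$. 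Since there are only $O(\log d)$ terms, the whole sum is $\tilde O(\SM^{\geq 1}(d,r))$, which is the announced bound.

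The main obstacle is more conceptual than technical: the underlying multiplication cost $\SM(\cdot,r)$ fails to be superlinear in the intermediate regime $r^{(5-\omega)/2}\leq d \leq r$, where its log-log slope equals $\omega - 2 < 1$. In this regime the naive bound $T(d) = \tilde O(\SM(d,r))$ cannot hold, since the dominant contribution to the recurrence comes from an intermediate level of the recursion tree rather than from its root. Introducing $\SM^{\geq 1}$, by definition the smallest function above $\SM$ whose log-log slope is everywhere at least $1$, is exactly the device that restores a geometric summation and gives a clean, uniform statement. Beyond this observation, the proof is a direct adaptation of the complexity analysis sketched in \cite{CaLe17}, \S 3.2.2, with $\SM$ and $\SM^{\geq 1}$ playing the role of the commutative multiplication cost.
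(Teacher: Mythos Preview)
Your proposal is correct and follows essentially the same approach as the paper: set up the half-gcd recurrence, unfold it into a sum over $O(\log d)$ levels, and bound each term by $\SM^{\geq 1}(d,r)$. The only cosmetic difference is that you place $\SM^{\geq 1}$ in the non-recursive cost (to account for the Euclidean division step) and then use the superlinearity of $\SM^{\geq 1}$ to bound the sum, whereas the paper writes the non-recursive cost as $\tilde O(\SM(d/2,r))$ and invokes the very definition $\SM^{\geq 1}(d,r) = \sup_{d'\leq d}\SM(d',r)\cdot d/d'$ to bound each $2^j\,\SM(d/2^j,r)$; both routes are equivalent and yield the same bound.
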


\begin{proof}
A careful look at the algorithm \texttt{FastExtendedRGCD} shows that its 
complexity in operations in $K$ is bounded by $T(d,r)$ where $T(d,r)$ 
satisfies the recurrence relation:
$$\textstyle
T(d,r) \leq 2 \: T\big(\frac d 2,r\big) + \tilde O \big(\SM(\frac d 2,r)\big).$$
By induction, it follows that for $m \geq 0$, 
\begin{align*}
T(d,r) 
 & \leq 2^m T\big({\textstyle \frac d{2^m}},r\big) + \tilde O\left(\sum_{j=1}^{m} 2^j \SM\big({\textstyle \frac d{2^j}},r\big)\right) \\
 & \leq 2^m T\big({\textstyle \frac d{2^m}},r\big) + \tilde O\left(m \cdot\SM^{\geq 1}(d,r)\right).
\end{align*}
Taking $m = \lfloor \log_2 d \rfloor$, we get
$T(d,r) = \tilde O (\SM^{\geq 1}(d,r))$ as expected.
\end{proof}

\begin{rmq}
A similar complexity is available for the computation of {\sc lcm}'s.
\end{rmq}

\noindent
{\bf Minimal subspace polynomial.}
Let $(x_1, \ldots, x_d)$ be a family of elements of $L$ which is free 
over $K$. We are interesting in computing the unique monic polynomial $P 
\in L[X,\sigma]$ of degree $d$ such that $P(x_i) = 0$ for all $i \in 
\{1, \ldots, d\}$. 

\begin{lem}
\label{lem:evalrem}
For $x \in L$, $x \neq 0$, the value $\frac{P(x)} x$ is the remainder 
in the right
Euclidean division of $P$ by $X - \frac{\sigma(x_i)}{x_i}$.
\end{lem}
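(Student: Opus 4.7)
The plan is to leverage the ring homomorphism structure of evaluation. Write $\alpha = \frac{\sigma(x)}{x} \in L$ and perform the right Euclidean division of $P$ by the degree-one polynomial $X - \alpha$, which yields a unique decomposition
$$P = Q \cdot (X - \alpha) + r$$
with $Q \in L[X,\sigma]$ and $r \in L$ (the remainder must have degree at most zero).

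The key observation is that $X - \alpha$ annihilates $x$ under evaluation: applying the $K$-linear endomorphism $(X-\alpha)(\sigma) = \sigma - \alpha \cdot \mathrm{id}_L$ to $x$ gives
$$(X-\alpha)(x) = \sigma(x) - \alpha x = \sigma(x) - \frac{\sigma(x)}{x} \cdot x = 0.$$

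Now recall from Lemma~\ref{lem:iso1} that $P \mapsto P(\sigma)$ is a homomorphism of $K$-algebras $L[X,\sigma] \to \End_K(L)$, so $Q(X-\alpha)$ acts on $L$ as the composition $Q(\sigma) \circ (X-\alpha)(\sigma)$. Evaluating the decomposition above at $x$ therefore yields
$$P(x) \;=\; Q(\sigma)\bigl((X-\alpha)(x)\bigr) + r \cdot x \;=\; Q(\sigma)(0) + r x \;=\; r x,$$
since a constant $r \in L$ acts on $L$ by multiplication. Dividing by $x \neq 0$ gives $r = P(x)/x$, which is exactly the claim.

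There is no real obstacle here; the only subtlety is to remember that evaluation $P \mapsto P(x)$ is \emph{not} a ring homomorphism $L[X,\sigma] \to L$ (the source is noncommutative), but it factors through the genuine ring homomorphism $P \mapsto P(\sigma)$ followed by application at $x$, which is precisely what makes the computation $P(x) = Q(\sigma)((X-\alpha)(x)) + rx$ legitimate.
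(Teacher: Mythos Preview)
Your proof is correct and is precisely the ``direct computation'' the paper alludes to: write $P = Q(X-\alpha) + r$ with $\alpha = \sigma(x)/x$, observe $(X-\alpha)(x)=0$, and use that $P \mapsto P(\sigma)$ is a ring homomorphism to conclude $P(x) = rx$. There is nothing to add.
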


\begin{proof}
It is a direct computation.
\end{proof}

Lemma \ref{lem:evalrem} shows that the polynomial $P$ we are looking for 
is nothing but the left-lcm of the polynomials $X - \frac{\sigma(x_i)}
{x_i}$. As a consequence, $P$ can be computed for a cost of
$\tilde O (\SM^{\geq 1}(d,r))$ operations in $K$ using fast algorithms
for {\sc lcm} computation together with a ``tree division strategy'' 
\cite{GaGe03}, \S 10.1.

\medskip

\noindent
{\bf General multievaluation.}
We consider again a free family $(x_1, \ldots, x_d)$ of elements of $K$.
The general multievaluation problem consists in evaluating a given 
polynomial $P \in K[X,\sigma]$ of degree $d$ at the $x_i$'s. Thanks
to Lemma~\ref{lem:evalrem}, the value $P(x_i)$ agrees with $x_i$ times
the remainder of the right division of $P$ by 
We are then reduced to compute the reduction of a given polynomials
modulo some given moduli. This can be done efficiently using the
strategy of \cite{GaGe03}, \S 10.1 for a cost of $\tilde O(\SM^{\geq 1}(d,r))$
operations in $K$. If $d$ are $r$ have the same order of magnitude,
one can preferably compute the matrix of $P(\sigma)$ using the
formula of Proposition \ref{prop:evalbasis} and derive from it 
the values of the $P(x_i)$'s thanks to a single matrix multiplication.
The cost of the resulting algorithm is $O(r^\omega)$.

\begin{rmq}
If the $x_i$'s are the first vectors of a normal basis of 
$L$ over $K$, one can use directly the algorithm of \S \ref{subs:evalincnorm}
which has a better complexity.
\end{rmq}

\noindent
{\bf General interpolation.}
We keep the family $(x_1, \ldots, x_d)$ and consider in addition some 
values $y_1, \ldots, y_d \in L$. We address the question of computing a 
polynomial $P$ of degree at most $d{-}1$ such that $P(x_i) = y_i$ for 
all $i$. Thanks to Lemma~\ref{lem:evalrem}, the above problem reduces to 
solve the following Chinese Remainder system:
$$P(x_i) \equiv x_i y_i \pmod{\textstyle
X - \frac{\sigma(x_i)}{x_i}}$$
which again can be done for a cost of $\tilde O(\SM^{\geq 1}(d,r))$
operations in $K$.

\begin{rmq}
If the $x_i$'s are the first vectors of a normal basis of 
$L$ over $K$, one can use directly the Algorithm
\texttt{SmallDegreeInterpolation} which has a better complexity.
\end{rmq}

\noindent
{\bf Gabulin codes.}
The solution sketched above to the general multievaluation problem 
allows us to encode messages in the framework of (generalized) Gabidulin 
codes \cite{Ro15} in complexity $O(n^\omega)$ where $n$ is the length of
the code. (Better complexities are possible when the dimension of the
code is much smaller than its length.)
In the similar fashion, efficient decoding is also possible using the 
key equation together with the half-{\sc gcd} algorithm. The
resulting algorithms run in $\tilde O(SM^{\geq 1}(n,k))$ operations
in $K$ where $n$ and $k$ denotes the length and the dimension of the
Gabidulin code respectively.

{\small

}

\end{document}